\title{Modeling and Pricing of Covariance and Correlation Swaps for Financial Markets with Semi-Markov Volatilities}
\author{
        Giovanni Salvi \\
                Department of Methods and Models for Economics, \\
                Territory and Finance MEMOTEF\\
		        `Sapienza' University of Rome, \\
		        Rome, Italy \\
            \and
        Anatoliy V. Swishchuk\footnote{Corresponding author} \\
        Department of Mathematics and Statistics, \\
        University of Calgary, \\
        Calgary, Alberta, Canada
}
\begin{document}
\setlength{\textheight}     {21.0  cm}
\setlength{\textwidth}      {16.0  cm}
\setlength{\baselineskip}   { 0.6  cm}
\setlength{\baselineskip}   { 23 pt}
\pagestyle{fancy}
\fancyhf{}
\fancyhead[LE,RO]{\bfseries\thepage}
\fancyhead[LO]{\bfseries\rightmark}
\fancyhead[RE]{\bfseries\leftmark}
\fancypagestyle{plain}{\fancyhead{} \renewcommand{\headrulewidth}{0pt}}

\newcommand{\beq}{\begin{equation}}
\newcommand{\eeq}{\end{equation}}
\newcommand{\bea}{\begin{eqnarray}}
\newcommand{\eea}{\end{eqnarray}}
\newcommand{\Prob}{\mathbb{P}}
\newcommand{\Exp}{\mathbb{E}}
\newcommand{\Alg}{\mathscr{F}}
\newcommand{\Real}{\mathbb{R}}
\newcommand{\Natural}{\mathbb{N}}
\newcommand{\Borel}{\mathscr{B}}
\newtheorem{teo}{Theorem}
\newtheorem{prop}{Proposition}
\newtheorem{rem}{Remark}
\newtheorem{lemma}{Lemma}
\newtheorem{cor}{Corollary}
\def\<{\langle}
\def\>{\rangle}
\thispagestyle{empty}

\maketitle

\begin{abstract}
In this paper, we model financial markets with semi-Markov volatilities and price covarinace and correlation swaps for this markets. Numerical evaluations of varinace, volatility, covarinace and correlations swaps with semi-Markov volatility are presented as well. The novelty of the paper lies in pricing of volatility swaps in closed form, and pricing of covariance and correlation swaps in a market with two risky assets.
\end{abstract}

\clearpage

\section*{Introduction}
One of the recent and new financial products are variance and voaltility swaps,
which are useful for volatility hedging and speculation. The market for variance
and volatility swaps has been growing, and many investment banks and other financial 
institutions are now actively quoting volatility swaps on various assets: stock
indexes, currencies, as well as commodities. A stock's volatility is the simpliest
measure of its riskiness or uncertainty. Formally, the volatility $\sigma_R$ is the annualized
standard deviation of the stock's returns during the period of interest, where the
subscript R denotes the observed or 'realized' volatility. Why trade volatility or
variance? As mentioned in [5], 'just as stock investors think they know something
about the direction of the stock market so we may think we have insight into the
level of future volatility. If we think current volatility is low, for the right price we
might want to take a position that profits if volatility increases'.

In this paper, we model financial markets with semi-Markov volatilities and price covarinace and correlation swaps for this markets. Numerical evaluations of varinace, volatility, covarinace and correlations swapswith semi-Markov volatility are presented as well. 

Volatility swaps are forward contracts on future realized stock volatility, variance swaps are similar contract on variance, the square of the future volatility, both these instruments provide an easy way for investors to gain exposure to the future level of volatility.
A stock's volatility is the simplest measure of its risk less or uncertainty. Formally, the volatility $\sigma_R$ is the annualized standard deviation of the stock's returns during the period of interest, where the subscript R denotes the observed or "realized" volatility.

The easy way to trade volatility is to use volatility swaps, sometimes called realized volatility forward contracts, because they provide pure exposure to volatility(and only to volatility).

A stock {\it volatility swap} is a forward contract on the annualized volatility. Its payoff at expiration is equal to 
$$
N(\sigma_R(S)-K_{vol}),
$$

where $\sigma_R(S)$ is the realized stock volatility (quoted in annual terms) over the life of contract,
 $$\sigma_R(S):=\sqrt{\frac{1}{T}\int_0^T\sigma^2_sds},$$
$\sigma_t$ is a stochastic stock volatility,
$K_{vol}$ is the annualized volatility delivery price, and N is the notional amount of the swap in dollar per annualized volatility point. The holder of a volatility swap at expiration receives N dollars for every point by which the stock's realized volatility $\sigma_R$ has exceeded the volatility delivery price $K_{vol}$. The holder is swapping a fixed volatility 
$K_{vol}$ for the actual (floating) future volatility $\sigma_R$. 
We note that usually $N=\alpha I,$
where $\alpha$ is a converting parameter such as 1 per volatility-square, 
and I is a long-short index (+1 for long and -1 for short).

Although options market participants talk of volatility, it is variance, or volatility squared, that has more fundamental significance (see Demeterfi {\it et al} \cite{DDKZ}).

{\it A variance swap} is a forward contract on annualized variance, the square of the realized volatility. Its payoff at expiration is equal to 
$$
N(\sigma_R^2(S)-K_{var}),
$$
where $\sigma_R^2(S)$ is the realized stock variance(quoted in annual terms) over the life of the contract, 
$$\sigma^2_R(S):=\frac{1}{T}\int_0^T\sigma^2_sds,$$
$K_{var}$ is the delivery price for variance, and N is the notional amount of the swap in dollars per annualized volatility point squared. The holder of variance swap at expiration receives N dollars for every point by which the stock's realized variance $\sigma_R^2(S)$ has exceeded the variance delivery price $K_{var}$. 

Therefore, pricing the variance swap reduces to calculating the realized volatility square. 

Valuing a variance forward contract or swap is no different from valuing any other derivative 
security. The value of a forward contract $P$ on future realized variance with strike price $K_{var}$ is the expected present value of the future payoff in the risk-neutral world:
$$P=E\{e^{-rT}(\sigma^2_R(S)-K_{var})\},$$
where $r$ is the risk-free discount rate corresponding to the expiration date $T,$ and $E$ denotes the expectation.

Thus, for calculating variance swaps we need to know only $E\{\sigma^2_R(S)\},$ namely, mean value of the underlying variance.

To calculate volatility swaps we need more. From Brockhaus and Long \cite{BrockLong} approximation (which is used the second order Taylor expansion for function $\sqrt{x}$) we have (see also Javaheri {\it et al} \cite{Java}):
$$E\{\sqrt{\sigma^2_R(S)}\}\approx\sqrt{E\{V\}}-\frac{Var\{V\}}{8E\{V\}^{3/2}},$$
where $V:=\sigma^2_R(S)$ and $\frac{Var\{V\}}{8E\{V\}^{3/2}}$ is the convexity adjustment.

Thus, to calculate volatility swaps we need both $E\{V\}$ and $Var\{V\}.$

The realised continuously sampled variance is defined in the following way:
$$
V:=Var(S):=\frac{1}{T}\int_0^T\sigma^2_tdt.
$$
Realised continuously sampled volatility is defined as follows:
$$
Vol(S):=\sqrt{Var(S)}=\sqrt{V}.
$$


Option dependent on exchange rate movements, such as those paying in a currency different from the underlying currency, have an 
exposure to movements of the correlation between the asset and the exchange rate, this risk may be eliminated by using covariance swap.
Variance and volatility swaps have been studied by Swishchuk \cite{Swish}. The novelty of this paper with respect to \cite{Swish} is that we calculate the volatility swap price explicitly, moreover we price covariance and correlation swap in a two risky assets market model.

A {\it covariance swap} is a covariance forward contact of the underlying rates $S^1$ and $S^2$ which payoff at expiration is equal to 
$$
N(Cov_R(S^1,S^2)-K_{cov}), 
$$
where $K_{cov}$ is a strike price, N is the notional amount, $Cov_R(S^1,S^2)$ is a covariance between two assets $S^1$ and $S^2$.

Logically, a {\it correlation swap} is a correlation forward contract of two underlying rates $S^1$ and $S^2$ which payoff at expiration is equal to:
$$N(Corr_R(S^1,S^2)-K_{corr}),$$
where $Corr(S^1,S^2)$ is a realized correlation of two underlying assets $S^1$ and $S^2 , K_{corr}$ is a strike price, N is the notional amount.

Pricing covariance swap, from a theoretical point of view, is similar to pricing variance swaps, since

$$
Cov_R(S^1,S^2)=1/4\{\sigma_R^2(S^1S^2)-\sigma_R^2(S^1/S^2)\} 
$$
where $S^1$ and $S^2$ are given two assets,
$\sigma_R^2(S)$ is a variance swap for underlying assets,
$Cov_R(S^1,S^2)$ is a realized covariance of the two underlying assets $S^1$ and $S^2$.

Thus, we need to know variances for $S^1S^2$ and for $S^1/S^2$ (see Section 4.2 for details).
Correlation $Corr_R(S^1,S^2)$ is defined as follows:
$$
Corr_R(S^1,S^2) = \frac {Cov_R(S^1,S^2)}
{\sqrt{\sigma_R^2(S^1)}\sqrt{\sigma_R^2(S^2)}},
$$
where $Cov_R(S^1,S^2)$ is defined above and $\sigma^2_R(S^1) $ in section 3.4. 

Given two assets $S^1_t$ and $S^2_t$ with $t\in [0,T],$ sampled on days $t_0=0<t_1<t_2<...<t_n=T$ between today and maturity $T,$ the log-return each asset is:
$R^j_i:=\log(\frac{S^j_{t_{i}}}{S^j_{t_{i-1}}}),  \quad i=1,2,...,n, \quad j=1,2.$ 

Covariance and correlation can be approximated by 
$$
Cov_n(S^1,S^2)=\frac{n}{(n-1)T}\sum_{i=1}^nR^1_iR^2_i
$$
and 
$$
Corr_n(S^1,S^2)=\frac{Cov_n(S^1,S^2)}{\sqrt{Var_n(S^1)}\sqrt{Var_n(S^2)}},
$$
respectively.

The literature devoted to the volatility derivatives is growing. We give here a short overview of the latest development in this area. 
The Non-Gaussian Ornstein-Uhlenbeck stochastic volatility model  was used by Benth et al. \cite{BGK} to study volatility and variance swaps. M. Broadie and A. Jain \cite{BJ} evaluated price and hedging strategy for volatility derivatives in the Heston square root stochastic volatility model and in \cite{BJ2} they compare result from various model in order to investigate the effect of jumps and discrete sampling on variance and volatility swaps. Pure jump process with independent increments return models were used by Carr et al \cite{CarrGMadanYor} to price derivatives written on realized variance, and subsequent development by Carr and Lee \cite{CarrLee1}. We also refer to Carr and Lee \cite{CarrLee2} for a good survey on volatility derivatives. Da Foneseca et al. \cite{DFIG} analyzed the influence of variance and covariance swap in a market by solving a portfolio optimization problem in a market with risky assets and volatility derivatives. Correlation swap price has been investigated by Bossu \cite{Bossu1} and \cite{Bossu2} for component of an equity index using statistical method. The paper \cite{DMV2} discusses the price of correlation risk for equity options. Pricing volatility swaps under Heston's model with regime-switching and pricing options under a generalized Markov-modulated jump-diffusion model are discussed in \cite{ElliottSC1} and \cite{ElliottSC2}, respectively.  The paper \cite{HRR} considers the pricing of a range of volatility derivatives, including volatility and variance swaps and swaptions. The pricing options on realized variance in the Heston model with jumps in returns and volatility is studied in \cite{Sepp}. An analytical closed-forms pricing of pseudo-variance, pseudo-volatility, pseudo-covariance and pseudo-correlation swaps is studied in \cite{Swish2}. The paper \cite{WFV} investigates the behaviour and hedging of discretely observed volatility derivatives.

The rest of the paper is organized as follows. Section 1 is devoted to the study of semi-Markov volatilites and their martingale characterization. Section 2 deals with pricing of variance and volatility swaps and numerical evaluation of them for markets with semi-Markov volatilities. Section 3 studies the pricing of covarinace and correlation swaps for a two risky assets in financial markets with semi-Markov volatilities. We also give here a numerical evaluation of these derivatives.  Appendix presents  a first order correction for a realized correlation.



\section{Martingale Representation of Semi-Markov Processes}

Let  $(\Omega, \Alg, (\Alg_{t})_{t \in \Real_{+}}, \Prob)$ be a filtered probability space, with a right-continuous filtration $(\Alg_{t})_{t \in \Real_{+}}$ and probability $\Prob$.

Let $(X,\mathscr{X})$ be a measurable space and
\bea
\label{Eq:SMkernel}
Q_{SM} (x,B,t) := P(x,B)G_{x}(t) \qquad \qquad \textrm{for} \ x \in X, B \in \mathscr{X}, t \in \Real_{+},
\eea
be a semi-Markov kernel. Let $(x_{n},\tau_{n}; n \in \Natural)$ be a $(X \times \Real_{+}, \mathscr{X} \otimes \Borel_{+})$-valued Markov renewal process with $Q_{SM}$ the associated kernel, that is
\bea
\Prob (x_{n+1} \in B , \tau_{n+1} - \tau_{n} \leq t \ | \ \Alg_{n}) = Q_{SM}(x_{n}, B, t).
\eea
Let define the process
\bea
\nu_{t} := sup \{ n \in \Natural : \tau_{n} \leq t \} 
\eea
that gives the number of jumps of the Markov renewal process in the time interval $(0,t]$ and
\bea
\theta_{n} := \tau_{n} - \tau_{n-1}
\eea
that gives the sojourn time of the Markov renewal process in the $n$-th visited state.
The semi-Markov process, associated with the Markov renewal process  $(x_{n},\tau_{n})_{n \in \Natural}$, is defined by
\bea
x_{t} := x_{\nu(t)} \qquad \qquad \textrm{for} \ t \in \Real_{+} .
\eea
Associated with the semi-Markov process it is possible to define some auxiliaries processes. We are interested in the backward recurrence time (or life-time) process defined by
\bea
\gamma(t) := t - \tau_{\nu(t)} \qquad \qquad \textrm{for} \  t \in \Real_{+} .
\eea
The next result characterizes backward recurrence time process (cf. Swishchuk \cite{Swish}).
\begin{prop}
The backward recurrence time $(\gamma(t))_{t}$ is a Markov process with generator
\bea
Q_{\gamma} f (t) = f'(t) + \lambda (t) [f(0) - f(t)] ,
\eea
where $\lambda (t) = -\frac{\overline{G_{x}}'(t)}{\overline{G_{x}}(t)}$, $\overline{G_{x}}(t) = 1 - G_{x}(t)$ and $Domain (Q_{\gamma}) = C^{1}(\Real_{+})$
\end{prop}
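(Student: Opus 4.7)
The plan is to treat $\gamma$ as a piecewise-deterministic jump process and derive its infinitesimal generator directly from the definition, after first verifying the Markov property from the renewal structure of $(x_n,\tau_n)$.

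First I would establish that $\gamma$ is Markov. Between consecutive renewal epochs $\tau_n$ and $\tau_{n+1}$, the trajectory of $\gamma$ is completely deterministic: $\gamma(t)=t-\tau_n$ grows linearly with slope $1$, and at $\tau_{n+1}$ it resets to $0$. Hence the only randomness in the future evolution of $\gamma$ after time $t$ comes from the distribution of the remaining sojourn in the current state. By the definition of the semi-Markov kernel $Q_{SM}$, the sojourn time in state $x_{\nu(t)}$ has distribution $G_{x_{\nu(t)}}$, and conditionally on the event $\{\theta_{\nu(t)+1}>\gamma(t)\}$ (which is exactly the event that no renewal has occurred yet in the current sojourn), the residual sojourn has the conditional tail
\[
\Prob\bigl(\theta_{\nu(t)+1}>\gamma(t)+h \,\big|\, \theta_{\nu(t)+1}>\gamma(t),\,\Alg_{t}\bigr)=\frac{\overline{G_{x}}(\gamma(t)+h)}{\overline{G_{x}}(\gamma(t))}.
\]
This depends on the past only through $\gamma(t)$ (with $x$ the current state, treated as fixed on a trajectory), which delivers the Markov property.

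Next I would compute the generator from first principles. Fix $u\in\Real_{+}$, start $\gamma$ at $u$, and expand, for small $h>0$,
\[
\Exp\bigl[f(\gamma(h))\mid \gamma(0)=u\bigr]=\frac{\overline{G_{x}}(u+h)}{\overline{G_{x}}(u)}\,f(u+h)+\Bigl(1-\frac{\overline{G_{x}}(u+h)}{\overline{G_{x}}(u)}\Bigr)f(0)+o(h),
\]
where the first term captures the no-jump scenario (in which $\gamma$ simply advances to $u+h$) and the second term captures the single-jump scenario (in which $\gamma$ resets to $0$); the probability of two or more renewals in $[0,h]$ is $o(h)$. Subtracting $f(u)$, dividing by $h$, and letting $h\downarrow 0$ gives, using $f\in C^{1}$ and the elementary identity $\lim_{h\downarrow 0} h^{-1}\bigl[1-\overline{G_{x}}(u+h)/\overline{G_{x}}(u)\bigr]=-\overline{G_{x}}'(u)/\overline{G_{x}}(u)=\lambda(u)$, the formula
\[
Q_{\gamma}f(u)=f'(u)+\lambda(u)\bigl[f(0)-f(u)\bigr],
\]
which is the claim.

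The routine pieces are the Taylor expansions and the computation of the hazard-rate limit, so the real work lies in the first step: one has to be careful that the renewal structure of $(x_n,\tau_n)$ really does reduce the conditional law of the future of $\gamma$ to its current value, since a priori the semi-Markov process itself is not Markov. The key observation is that the Markov renewal property encoded in $Q_{SM}$ makes the residual sojourn depend only on $\gamma(t)$ through the conditional tail $\overline{G_{x}}(\gamma(t)+\cdot)/\overline{G_{x}}(\gamma(t))$, and that after each renewal $\gamma$ starts afresh at $0$. Finally, the domain condition $C^{1}(\Real_{+})$ is dictated by the appearance of $f'$ in the generator and the need for the above expansion to hold pointwise.
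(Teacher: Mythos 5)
Your proposal is correct and follows essentially the same route as the paper: both condition on whether a renewal occurs in the small time increment, use the conditional tail $\overline{G_{x}}(u+h)/\overline{G_{x}}(u)$ for the no-jump branch and the reset to $0$ for the jump branch, and obtain the hazard rate $\lambda(u)=-\overline{G_{x}}'(u)/\overline{G_{x}}(u)$ in the limit. The only difference is cosmetic: the paper writes the jump contribution as an exact integral $\int_{t}^{T}f(T-u)G_{x}'(u)\,du$ before passing to the limit, whereas you absorb it into an $o(h)$ estimate, and you make the preliminary Markov-property check more explicit than the paper does.
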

\begin{proof}
Let $t$ be the present time such that $\gamma(t) = t$, without loss of generality we can assume that $t < \tau_{1}$, then for $T>t$ we have
\bea
\Exp_{t} \{ f(\gamma(T)) \} = \Exp_{t} \{ f(\gamma(T)) \mathbb{I}_{\theta_{1}>T} \} + \Exp_{t} \{ f(\gamma(T)) \mathbb{I}_{\theta_{1}\leq T} \} .
\eea
Using the properties of conditional expectation we obtain
\bea
\begin{aligned}
\Exp_{t} \{ f(\gamma(T)) \} = & f(T) \frac{\overline{G}_{x}(T)}{\overline{G}_{x}(t)} + \frac{1}{\overline{G}_{x}(t)} \Exp \{ f(\gamma(T)) \mathbb{I}_{t < \theta_{1} \leq T} \} \\
= & f(T) \frac{\overline{G}_{x}(T)}{\overline{G}_{x}(t)} + \frac{1}{\overline{G}_{x}(t)} \int_{t}^{T} f(T-u) G_{x}'(u)du .
\end{aligned}
\eea
By adding and subtracting $f(t)$ in the integrand we get
\bea
\Exp_{t} \{ f(\gamma(T)) \} = f(T) \frac{\overline{G}_{x}(T)}{\overline{G}_{x}(t)} + \frac{1}{\overline{G}_{x}(t)} \int_{t}^{T} (f(T-u) - f(t)) G_{x}'(u)du + f(t)\frac{\overline{G}_{x}(t) - \overline{G}_{x}(T)}{\overline{G}_{x}(t)} , \nonumber
\eea
then
\bea
\Exp_{t} \{ f(\gamma(T)) \} - f(t) = (f(T)-f(t)) \frac{\overline{G}_{x}(T)}{\overline{G}_{x}(t)} + \frac{1}{\overline{G}_{x}(t)} \int_{t}^{T} (f(T-u) - f(t)) G_{x}'(u)du .
\eea
Now we recall the definition of the generator and using the above equation we have
\bea
Q_{\gamma} f(t) = \lim_{T \to t} \frac{\Exp_{t} \{ f(\gamma(T)) \} - f(t)}{T-t} =  f'(t) - \frac{\overline{G_{x}}'(t)}{\overline{G_{x}}(t)} [f(0) - f(t)] ,
\eea
this concludes the proof.
\end{proof}

\begin{rem}
As well known, semi-Markov process preserve the lost-memories property only at transition time, then $(x_{t})_{t}$ is not Markov. However, if we consider the joint process $(x_{t},\gamma(t))_{t \in \Real_{+}}$, we record at any instant the time already spent by the semi-Markov process in the present state, then it result that $(x_{t},\gamma(t))_{t \in \Real_{+}}$ is a Markov process.
\end{rem}

The following result allow us to associate a martingale to any Markov process we refer to Swishchuk \cite{Swish}, Elliott and Swishchuk \cite{ElliottSwish} for details.

\begin{prop} 
\label{Prop:AssMg}
Let $(x_{t})_{t \in \Real_{+}}$ be a Markov process with generator $Q$ and $f \in$ Domain $(Q)$, then
\bea
m^{f}_{t} := f(x_{t}) - f(x_{0}) - \int_{0}^{t} Q f(x_{s}) ds
\eea
is a zero-mean martingale with respect to $\Alg_{t} := \sigma \{ y(s);0 \leq s \leq t \}$.
\end{prop}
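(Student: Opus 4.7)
The plan is to verify the martingale property $\Exp[m^{f}_{t} \mid \Alg_{s}] = m^{f}_{s}$ for all $0 \leq s \leq t$, and then deduce zero-mean from $m^{f}_{0} = f(x_{0}) - f(x_{0}) - 0 = 0$. Since the $\int_{0}^{s} Qf(x_{u})du$ part of $m^{f}_{t}$ is already $\Alg_{s}$-measurable, the problem reduces to showing
\[
\Exp\Bigl[ f(x_{t}) - \int_{s}^{t} Qf(x_{u})\,du \,\Bigm|\, \Alg_{s} \Bigr] = f(x_{s}).
\]

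First, I would introduce the Markov semigroup $T_{h}f(x) := \Exp[f(x_{h}) \mid x_{0}=x]$ associated to $(x_{t})$. The Markov property, combined with time-homogeneity, gives $\Exp[f(x_{t}) \mid \Alg_{s}] = T_{t-s}f(x_{s})$, and similarly $\Exp[Qf(x_{u}) \mid \Alg_{s}] = T_{u-s}Qf(x_{s})$ for $u \geq s$. Provided $Qf$ is integrable along paths (which is part of $f$ lying in $\mathrm{Dom}(Q)$), Fubini's theorem gives
\[
\Exp\Bigl[ \int_{s}^{t} Qf(x_{u})\,du \,\Bigm|\, \Alg_{s} \Bigr] = \int_{0}^{t-s} T_{v}Qf(x_{s})\,dv.
\]

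Second, I would invoke the defining relation between the generator and its semigroup: since $f \in \mathrm{Dom}(Q)$, the Kolmogorov backward equation (in integrated form) yields
\[
T_{h}f - f = \int_{0}^{h} T_{v}Qf\,dv, \qquad h \geq 0.
\]
Applying this with $h = t-s$ and evaluating at $x_{s}$ produces exactly the cancellation needed: $T_{t-s}f(x_{s}) - f(x_{s}) = \int_{0}^{t-s} T_{v}Qf(x_{s})dv$, hence $\Exp[m^{f}_{t}-m^{f}_{s}\mid \Alg_{s}]=0$.

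The only real subtlety, and the main obstacle I would flag, is regularity: strictly speaking one must justify that $Qf(x_{u})$ is jointly measurable and path-integrable so that Fubini applies, and that the semigroup identity $T_{h}f - f = \int_{0}^{h} T_{v}Qf\,dv$ holds pointwise rather than merely in some functional-analytic sense. Both are standard consequences of the assumption $f \in \mathrm{Dom}(Q)$, so in practice this is just Dynkin's formula and the proof amounts to the two-step computation above; indeed the authors already cite Swishchuk and Elliott--Swishchuk for the underlying framework, so the argument can be kept brief.
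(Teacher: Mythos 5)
Your proof is correct: the reduction via the Markov property to the integrated backward equation $T_{h}f - f = \int_{0}^{h}T_{v}Qf\,dv$, followed by the cancellation at $h = t-s$, is exactly Dynkin's formula, and the zero-mean claim follows from $m^{f}_{0}=0$. The paper itself gives no proof of this proposition --- it only cites Swishchuk and Elliott--Swishchuk --- and your argument is the standard semigroup derivation found in those references, so there is nothing to flag beyond the regularity caveats you already note (and the typo $y(s)$ for $x(s)$ in the statement of the filtration).
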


Let us evaluate the quadratic variation of this martingale (see Swishchuk \cite{Swish}, Elliott and Swishchuk \cite{ElliottSwish}).

\begin{prop}
\label{Prop:QuadVar}
Let $(x_{t})_{t \in \Real_{+}}$ be a Markov process with generator $Q$, $f \in$ Domain $(Q)$ and $(m^{f}_{t})_{t \in \Real_{+}}$ its associated martingale, then 
\bea
\langle m^{f} \rangle_{t} := \int_{0}^{t} [Q f^{2}(x_{s}) - 2 f(x_{s}) Q f(x_{s}) ] ds
\eea
is the quadratic variation of $m^{f}$.
\end{prop}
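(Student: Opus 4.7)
The plan is to exploit Proposition \ref{Prop:AssMg} applied twice — once to $f$ and once to $f^2$ — and then identify the predictable compensator of $(m^f_t)^2$ by uniqueness of the Doob–Meyer decomposition. Recall that $\langle m^f\rangle_t$ is characterized as the unique predictable, increasing process $A_t$ (with $A_0 = 0$) such that $(m^f_t)^2 - A_t$ is a local martingale, so it suffices to produce a decomposition of $(m^f_t)^2$ with drift term equal to $\int_0^t[Qf^2(x_s) - 2f(x_s)Qf(x_s)]\,ds$.

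First, I would apply Proposition \ref{Prop:AssMg} to the function $f^2$ (assuming $f^2 \in \mathrm{Domain}(Q)$, which is the standing hypothesis implicit in the statement), obtaining the martingale
\bea
m^{f^2}_t = f^2(x_t) - f^2(x_0) - \int_0^t Qf^2(x_s)\,ds,
\eea
so that $f^2(x_t) = f^2(x_0) + \int_0^t Qf^2(x_s)\,ds + m^{f^2}_t$. Next, since $f(x_t) = f(x_0) + \int_0^t Qf(x_s)\,ds + m^f_t$ is a semimartingale (a martingale plus a finite-variation continuous process), I would apply the integration-by-parts formula (Itô's formula for $y \mapsto y^2$) to $f(x_t)$:
\bea
f^2(x_t) = f^2(x_0) + 2\int_0^t f(x_{s-})\,df(x_s) + [f(x)]_t.
\eea
Splitting the stochastic integral into its drift and martingale parts gives $2\int_0^t f(x_{s-})\,df(x_s) = 2\int_0^t f(x_s)Qf(x_s)\,ds + 2\int_0^t f(x_{s-})\,dm^f_s$, and the last integral is a local martingale. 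Since the continuous finite-variation part of $f(x_t)$ vanishes, $[f(x)]_t = [m^f]_t$.

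Comparing the two decompositions of $f^2(x_t)$ and isolating the predictable finite-variation part by the uniqueness of the Doob–Meyer decomposition yields
\bea
\langle m^f\rangle_t = \int_0^t Qf^2(x_s)\,ds - 2\int_0^t f(x_s)Qf(x_s)\,ds,
\eea
which is exactly the claimed formula (the predictable compensator of $[m^f]_t$ coincides with $\langle m^f\rangle_t$). The main obstacle I expect is technical rather than conceptual: justifying the use of Itô's formula for the (potentially discontinuous) semimartingale $f(x_t)$ and, more delicately, passing from $[m^f]_t$ to its predictable projection $\langle m^f\rangle_t$, which relies on the fact that the jumps of the semi-Markov process are totally inaccessible so that the compensation produces a continuous integral against $ds$. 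One must also verify the integrability conditions that make $m^{f^2}$ and the stochastic integral against $dm^f$ genuine martingales rather than merely local martingales, which is standard provided $f$ and $Qf$ are bounded on the state space reached up to time $t$.
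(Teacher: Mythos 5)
Your argument is correct, but note that the paper itself gives no proof of Proposition \ref{Prop:QuadVar}: it is stated as a quoted result, with the reader deferred to Swishchuk \cite{Swish} and Elliott and Swishchuk \cite{ElliottSwish} for details. What you have written is the standard derivation behind that cited result --- apply Proposition \ref{Prop:AssMg} to both $f$ and $f^{2}$, use integration by parts on $f^{2}(x_{t})$, and identify $\int_{0}^{t}[Qf^{2}(x_{s})-2f(x_{s})Qf(x_{s})]\,ds$ as the predictable compensator by uniqueness of the Doob--Meyer decomposition --- and it is exactly the manipulation the paper itself relies on later (e.g.\ in the proof of Proposition \ref{Prop:Expsigma4}, where It\^{o}'s lemma is applied to $\sigma^{4}$ and $d\langle m^{\sigma^{2}}\rangle_{t}$ is substituted from this proposition). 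Two small points worth making explicit if you write this up: the drift integrand $f(x_{s-})Qf(x_{s})$ may be replaced by $f(x_{s})Qf(x_{s})$ because the jump times form a Lebesgue-null set, and the identification of the compensator as an \emph{increasing} process uses the nonnegativity of the carr\'e du champ $Qf^{2}-2fQf\geq 0$, which holds for generators of Markov processes. Otherwise the only caveats are the integrability issues you already flag.
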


The next result concern the evaluation of the quadratic covariation, for details and proof we refer to Salvi and Swishchuk \cite{SalviSwish}.

\begin{prop}
\label{Prop:QuadCovar}
Let $(x_{t})_{t \in \Real_{+}}$ be a Markov process with generator $Q$,  $f,g \in$ Domain $(Q)$ such that $fg \in$ Domain $(Q)$.  Denote by $(m^{f}_{t})_{t \in \Real_{+}}, (m^{g}_{t})_{t \in \Real_{+}}$ their associated martingale. Then
\bea
\langle f(x_{\cdot}), g(x_{\cdot}) \rangle_{t} := \int_{0}^{t} \{ Q (f(x_{s})g(x_{s})) - [ g(x_{s}) Q f(x_{s}) + f(x_{s}) Q g(x_{s}) ] \} ds
\eea
is the quadratic covariation of $f$ and $g$.
\end{prop}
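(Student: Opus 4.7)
My plan is to reduce the statement to the already-established single-function version in Proposition \ref{Prop:QuadVar} by means of a polarization identity. The quadratic covariation of two (local) martingales satisfies
\[
\langle m^{f},m^{g}\rangle_{t}=\tfrac{1}{4}\bigl(\langle m^{f+g}\rangle_{t}-\langle m^{f-g}\rangle_{t}\bigr),
\]
so the strategy is: (i) apply Proposition \ref{Prop:QuadVar} separately to $f+g$ and $f-g$, noting that by linearity $f\pm g\in\mathrm{Domain}(Q)$ and $m^{f\pm g}=m^{f}\pm m^{g}$; (ii) expand the two resulting integrands algebraically; (iii) subtract and divide by four to read off the formula in the statement.

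Concretely, in step (ii) I would write
\[
(f+g)^{2}-(f-g)^{2}=4fg,
\]
\[
(f+g)Q(f+g)-(f-g)Q(f-g)=2\bigl(fQg+gQf\bigr),
\]
so that the difference of the two bracket integrands collapses to
\[
4\bigl[Q(fg)(x_{s})-f(x_{s})Qg(x_{s})-g(x_{s})Qf(x_{s})\bigr],
\]
giving exactly the claimed expression after the factor $1/4$ from polarization. The hypothesis $fg\in\mathrm{Domain}(Q)$ is used precisely at this moment to make sense of $Q(fg)$.

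The main technical point — and the one step that is not pure algebra — is justifying that Proposition \ref{Prop:QuadVar} is applicable to $f+g$ and to $f-g$, which requires $(f\pm g)^{2}=f^{2}\pm 2fg+g^{2}$ to lie in $\mathrm{Domain}(Q)$. Under the standing assumption that the single-function statement holds for $f$ and for $g$ (so $f^{2},g^{2}\in\mathrm{Domain}(Q)$) together with the hypothesis $fg\in\mathrm{Domain}(Q)$, this follows from linearity of the domain. If one prefers to avoid polarization, the same identity can be obtained directly by applying the associated-martingale construction of Proposition \ref{Prop:AssMg} to $fg$, writing the semimartingale decomposition $f(x_{t})=f(x_{0})+\int_{0}^{t}Qf(x_{s})ds+m_{t}^{f}$ (and analogously for $g$), and comparing the two expressions for $f(x_{t})g(x_{t})-f(x_{0})g(x_{0})$ via the integration-by-parts formula; the finite-variation parts must agree, and isolating them yields the stated compensator. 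I would mention this alternative only as a sanity check, and carry the proof out via polarization, which is shorter.
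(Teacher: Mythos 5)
Your polarization argument is correct, and the algebra checks out: with $m^{f\pm g}=m^{f}\pm m^{g}$, Proposition \ref{Prop:QuadVar} applied to $f+g$ and $f-g$ gives integrands whose difference is $Q\bigl(4fg\bigr)-2\bigl[(f+g)(Qf+Qg)-(f-g)(Qf-Qg)\bigr]=4\bigl[Q(fg)-fQg-gQf\bigr]$, and the factor $\tfrac14$ from $\langle m^{f},m^{g}\rangle_{t}=\tfrac14\bigl(\langle m^{f+g}\rangle_{t}-\langle m^{f-g}\rangle_{t}\bigr)$ yields exactly the stated compensator. Note, however, that the paper gives no proof of Proposition \ref{Prop:QuadCovar} at all; it defers to the reference by Salvi and Swishchuk, and the way the result is actually \emph{used} later (in the proof of Theorem \ref{teo:CovSwap}, via the product rule $d(fg)=f\,dg+g\,df+d\langle f,g\rangle$) suggests that the intended derivation is the second route you sketch: apply Proposition \ref{Prop:AssMg} to $fg$ to get one semimartingale decomposition of $f(x_{t})g(x_{t})$, apply integration by parts to the decompositions of $f(x_{t})$ and $g(x_{t})$ to get another, and identify the finite-variation parts. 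Both routes are legitimate; polarization is shorter but leans on the implicit domain assumptions of Proposition \ref{Prop:QuadVar} (you correctly flag that $(f\pm g)^{2}\in\mathrm{Domain}(Q)$ requires $f^{2},g^{2}\in\mathrm{Domain}(Q)$ in addition to the stated hypothesis $fg\in\mathrm{Domain}(Q)$ --- an assumption the paper itself makes tacitly whenever it writes $Qf^{2}$), whereas the direct comparison of decompositions uses only $f,g,fg\in\mathrm{Domain}(Q)$ and is therefore slightly cleaner under the hypotheses as literally stated. Either way your proposal is a valid, self-contained proof of the statement.
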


We would like to study the martingale associated to the Markov process $(x_{t},\gamma(t))_{t \in \Real_{+}}$ and its generator the following statement concern this task, it is a direct application of proposition \ref{Prop:AssMg}.

\begin{lemma}
(Swishchuk \cite{Swish}) Let $(x_{t})_{t \in \Real_{+}}$ be a semi-Markov process with kernel $Q_{SM}$ defined in (\ref{Eq:SMkernel}). Then, the process
\bea
m^{f}_{t} := f(x_{t},\gamma(t)) - \int_{0}^{t} Q f(x_{s},\gamma(s)) ds
\eea
is a martingale with respect to the filtration $\Alg_{t} := \sigma \{ x_{s}, \nu_{s}; 0 \leq s \leq t \}$, where $Q$ is the generator of the Markov process $(x_{t},\gamma(t))_{t \in \Real_{+}}$ given by
\bea
Q f(x,t) = \frac{d f}{dt}(x,t) + \frac{g_{x}(t)}{\overline{G}_{x}(t)} \int_{X}P(x,dy) [f(y,0) - f(x,t)] ,
\eea
here $g_{x}(t) = \frac{dG_{x}(t)}{dt}$.
\end{lemma}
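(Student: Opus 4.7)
The Remark already records that $(x_t,\gamma(t))_{t\in\Real_+}$ is a Markov process, so by Proposition \ref{Prop:AssMg} it is enough to verify that the operator $Q$ displayed in the statement is the infinitesimal generator of this joint process. Once that identification is made, $m^f_t$ is precisely the associated martingale of Proposition \ref{Prop:AssMg} and the lemma is immediate. The real content of the proof is therefore the computation of $Q$.

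To carry out that computation I would adapt the argument used in Proposition 1 for the generator of $\gamma(t)$ alone to the joint process, starting from a state $(x,t)$ and tracking the next jump. Given that the semi-Markov process is currently at $x$ with backward time $t$, the conditional law of the remaining sojourn time has tail $\overline{G}_x(t+s)/\overline{G}_x(t)$ and density $g_x(t+s)/\overline{G}_x(t)$. Decomposing $\Exp_{(x,t)}\{f(x_h,\gamma(h))\}$ on the disjoint events ``no jump in $[0,h]$'' and ``exactly one jump at some time $s\in[0,h]$'' (the event of two or more jumps contributes only $O(h^{2})$), and using that on the first event the state is deterministically $(x,t+h)$ while on the second it is $(y,h-s)$ with $y$ drawn from $P(x,\cdot)$, I obtain
\[
\Exp_{(x,t)}\{f(x_h,\gamma(h))\} = f(x,t+h)\frac{\overline{G}_x(t+h)}{\overline{G}_x(t)} + \frac{1}{\overline{G}_x(t)}\int_0^h\!\int_X f(y,h-s)\,P(x,dy)\,g_x(t+s)\,ds + o(h).
\]

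Dividing by $h$, subtracting $f(x,t)$, and letting $h\to 0$, the first term contributes $\frac{df}{dt}(x,t) - f(x,t)\,g_x(t)/\overline{G}_x(t)$ (using $\overline{G}_x'(t)=-g_x(t)$), while the second, by continuity of $f$ and $g_x$, contributes $\bigl(g_x(t)/\overline{G}_x(t)\bigr)\int_X f(y,0)\,P(x,dy)$. Summing the two reproduces exactly the formula for $Qf(x,t)$ stated in the lemma, and Proposition \ref{Prop:AssMg} then yields the martingale property. I expect the only delicate point to be the bookkeeping: one must employ the conditional hazard $g_x(t)/\overline{G}_x(t)$ rather than $g_x$ (because the process has already spent time $t$ in state $x$), and remember that $\gamma$ is reset to $0$ immediately after each jump, so the post-jump state is $(y,0)$ and not $(y,\gamma(t^-))$. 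These are precisely the features that distinguish the semi-Markov computation from the pure Markov case already handled by Proposition \ref{Prop:AssMg}.
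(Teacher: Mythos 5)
Your proposal is correct and follows the route the paper itself indicates: the lemma is obtained by applying Proposition \ref{Prop:AssMg} to the joint Markov process $(x_{t},\gamma(t))_{t}$, and the only substantive work is identifying its generator, which you do by the same one-jump decomposition used in the proof of Proposition 1 for $\gamma(t)$ alone (the paper omits this computation, deferring to Swishchuk \cite{Swish}). Your limit bookkeeping -- the hazard rate $g_{x}(t)/\overline{G}_{x}(t)$, the reset of $\gamma$ to $0$ after a jump, and the use of $\int_{X}P(x,dy)=1$ to absorb the $-f(x,t)$ term into the integral -- is all accurate.
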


The following statement follows directly from proposition \ref{Prop:QuadVar} and it allows us to evaluate the quadratic variation of the martingale associated with $(x_{t},\gamma(t))_{t \in \Real_{+}}$.

\begin{lemma}
\label{Lem:GenQ}
Let $(x_{t})_{t \in \Real_{+}}$ be a semi-Markov process with kernel $Q_{SM}$, $(x_{t},\gamma(t))_{t \in \Real_{+}}$ is a Markov process with generator Q, $f \in$ Domain $(Q)$ and $(m^{f}_{t})_{t \in \Real_{+}}$ its associated martingale, then 
\bea
\langle m^{f} \rangle_{t} := \int_{0}^{t} [Q f^{2}(x_{s},\gamma(s)) - 2 f(x_{s},\gamma(s)) Q f(x_{s},\gamma(s)) ] ds
\eea
is the quadratic variation of $m^{f}$.
\end{lemma}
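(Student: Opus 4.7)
The plan is to apply Proposition \ref{Prop:QuadVar} directly to the augmented two-dimensional process $(x_t, \gamma(t))_{t \in \Real_+}$, treating it as a Markov process in its own right. The necessary setup is already in place: Remark 1 records that while the semi-Markov process $x_t$ alone fails to be Markov, attaching the backward recurrence time $\gamma(t)$ restores the Markov property, and the immediately preceding lemma supplies the explicit form of its generator $Q$ together with the associated martingale $m^f_t = f(x_t, \gamma(t)) - \int_0^t Q f(x_s, \gamma(s))\, ds$.

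The key steps, in order, are as follows. First, I would invoke Remark 1 to ensure that $(x_t, \gamma(t))$ genuinely is a Markov process on the state space $X \times \Real_+$ with generator $Q$. Second, I would verify the hypotheses of Proposition \ref{Prop:QuadVar}: $f$ lies in Domain$(Q)$ by assumption, and we additionally need $f^2 \in$ Domain$(Q)$ in order for the integrand $Qf^2(x_s,\gamma(s)) - 2 f(x_s,\gamma(s)) Q f(x_s,\gamma(s))$ to be pointwise defined. Third, I would apply Proposition \ref{Prop:QuadVar} verbatim, substituting the pair $(x_s, \gamma(s))$ for the generic Markov argument; this yields the claimed identity without further computation.

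There is no real obstacle to overcome, since the statement is framed explicitly as a corollary of Proposition \ref{Prop:QuadVar}. The only point worth flagging is the implicit regularity assumption $f^2 \in$ Domain$(Q)$, which should be invoked to make sense of $Q f^2$ at the pair $(x_s, \gamma(s))$. Once that is granted, the expression $Qf^2 - 2fQf$ is precisely the carr\'e du champ operator $\Gamma(f,f)$ of $Q$, whose integral against $ds$ is known from the general theory of Markov processes to yield the predictable quadratic variation $\langle m^f \rangle_t$, completing the proof.
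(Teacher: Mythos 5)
Your proposal is correct and matches the paper exactly: the paper offers no separate proof, stating only that the lemma ``follows directly from Proposition \ref{Prop:QuadVar}'' applied to the Markov process $(x_t,\gamma(t))$ with generator $Q$, which is precisely your argument. Your added remark that $f^2 \in$ Domain$(Q)$ is implicitly required is a fair and worthwhile observation, since the paper leaves that hypothesis unstated.
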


\section{Variance and Volatility Swaps for Financial Markets with Semi-Markov Stochastic Volatilities}

Let consider a Market model with only two securities, the risk free bond and the stock. Let suppose that the stock price $(S_{t})_{t \in \Real_{+}}$ satisfies the following stochastic differential equation
\bea
dS_{t} = S_{t} (r dt + \sigma(x_{t},\gamma(t)) dw_{t})
\eea
where $w$ is a standard Wiener process independent of $(x,\gamma)$. 
We are interested in studying the property of the volatility $\sigma(x,\gamma)$.
Salvi and Swishchuk \cite{SalviSwish} have studied properties of volatility modulated by a Markov process here we would like to generalize their work to the semi-Markov case. First of all we study the second moment of the volatility, see Swishchuk \cite{Swish} for details and proof.

\begin{prop}
\label{Prop:Expsigma2}
Suppose that $\sigma \in$ Domain $(Q)$. Then 
\bea
\Exp \{ \sigma^{2}(x_{t},\gamma(t)) |  \Alg_{u} \} =  \sigma^{2}(x_{u},\gamma(u)) + \int_{u}^{t} Q \Exp \{ \sigma^{2}(x_{s},\gamma(s)) | \Alg_{u} \}  ds
\eea
for any $0 \leq u \leq t$.
\end{prop}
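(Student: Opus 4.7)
The plan is to apply Dynkin's formula for the joint Markov process $(x_{t},\gamma(t))_{t \in \Real_{+}}$, whose generator is the operator $Q$ identified in the Lemma immediately preceding Lemma~\ref{Lem:GenQ}. Applied to $f = \sigma^{2}$ (reading $\sigma^{2} \in$ Domain$(Q)$ as implicit in the stated hypothesis), that Lemma yields the $(\Alg_{t})$-martingale
\[
m_{t} := \sigma^{2}(x_{t},\gamma(t)) - \int_{0}^{t} Q\sigma^{2}(x_{s},\gamma(s))\,ds.
\]

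First I would condition the martingale identity $\Exp\{m_{t} \mid \Alg_{u}\} = m_{u}$ on $\Alg_{u}$ and cancel the common $\int_{0}^{u} Q\sigma^{2}(x_{s},\gamma(s))\,ds$, which is $\Alg_{u}$-measurable, on both sides. This produces
\[
\Exp\{\sigma^{2}(x_{t},\gamma(t)) \mid \Alg_{u}\} = \sigma^{2}(x_{u},\gamma(u)) + \Exp\!\left\{ \int_{u}^{t} Q\sigma^{2}(x_{s},\gamma(s))\,ds \,\bigg|\, \Alg_{u}\right\}.
\]
A Fubini exchange, valid under standard integrability of $Q\sigma^{2}$ on compact time intervals, then pulls the conditional expectation inside the integral.

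The only nontrivial step is to identify $\Exp\{Q\sigma^{2}(x_{s},\gamma(s)) \mid \Alg_{u}\}$ with $Q\,\Exp\{\sigma^{2}(x_{s},\gamma(s)) \mid \Alg_{u}\}$, where the $Q$ on the right acts on the latter quantity regarded as a function of $(x_{u},\gamma(u))$. Writing $P_{r}$ for the transition semigroup of $(x_{t},\gamma(t))$, the Markov property gives $\Exp\{h(x_{s},\gamma(s)) \mid \Alg_{u}\} = (P_{s-u} h)(x_{u},\gamma(u))$ for $h \in \{\sigma^{2}, Q\sigma^{2}\}$, so the required identity reduces to the semigroup–generator commutation $P_{s-u}Q = QP_{s-u}$ on Domain$(Q)$, i.e.\ Kolmogorov's backward equation.

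The main obstacle is precisely this last commutation, which requires Domain$(Q)$ to be preserved under $P_{r}$ and hence some regularity of $\sigma$ and of the sojourn-time density $g_{x}$; this should be read as implicit in the hypothesis $\sigma \in$ Domain$(Q)$. Once it is granted, substituting $Q P_{s-u}\sigma^{2}(x_{u},\gamma(u)) = Q\,\Exp\{\sigma^{2}(x_{s},\gamma(s))\mid\Alg_{u}\}$ back under the integral delivers the stated identity.
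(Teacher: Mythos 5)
Your proposal is correct and takes essentially the same route as the paper: the paper defers the proof of this proposition to Swishchuk \cite{Swish}, but its own proof of the analogous fourth-moment result (Proposition \ref{Prop:Expsigma4}) rests on exactly the same martingale decomposition $\sigma^{2}(x_{t},\gamma(t)) = \sigma^{2}(x_{0},\gamma(0)) + \int_{0}^{t}Q\sigma^{2}(x_{s},\gamma(s))\,ds + m^{\sigma^{2}}_{t}$ followed by taking (conditional) expectations. The one step you flag explicitly --- the interchange $\Exp\{Q\sigma^{2}(x_{s},\gamma(s))\mid\Alg_{u}\} = Q\,\Exp\{\sigma^{2}(x_{s},\gamma(s))\mid\Alg_{u}\}$ via the semigroup--generator commutation $P_{s-u}Q = QP_{s-u}$ on Domain$(Q)$ --- is performed silently in the paper, and is the standard identity for $\sigma^{2}$ in the generator's domain, so your argument is complete.
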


\begin{rem}
\label{Rem:Expsigma2}
From proposition \ref{Prop:Expsigma2}, we can directly solve the equation for $\Exp \{ \sigma^{2}(x_{t},\gamma(t)) |  \Alg_{u} \}$ and we obtain
\bea
\Exp \{ \sigma^{2}(x_{t},\gamma(t)) |  \Alg_{u} \} = e^{(t-u)Q} \sigma^{2}(x_{u},\gamma(u))
\eea
for any $0 \leq u \leq t$. \\
\end{rem}

In this semi-Markov modulated model we are able to evaluate high order moment of volatility, e.g next result concern the fourth moment.

\begin{prop}
\label{Prop:Expsigma4}
Suppose that $\sigma^{2} \in$ Domain $(Q)$. 
\bea
\Exp \{ \sigma^{4}(x_{t},\gamma(t)) \} = e^{tQ} \sigma^{4}(x,0) \qquad \qquad \textrm{for} \ t \in \Real_{+} .
\eea
\end{prop}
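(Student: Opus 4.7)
The argument should parallel the derivation of Proposition \ref{Prop:Expsigma2} together with Remark \ref{Rem:Expsigma2}, with the role of $\sigma^2$ now played by $\sigma^4 = (\sigma^2)^2$, and specialized to the deterministic initial condition $x_0 = x,\ \gamma(0) = 0$. The whole computation takes place on the Markov process $(x_t,\gamma(t))_{t \in \Real_{+}}$, whose generator $Q$ is given explicitly by Lemma \ref{Lem:GenQ}.

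First, I would verify that $\sigma^4 \in$ Domain$(Q)$. This is not listed as a hypothesis, but it follows from $\sigma^2 \in$ Domain$(Q)$ via the explicit form of $Q$: squaring preserves differentiability in $t$, and the jump integral against $P(x,dy)$ remains finite under the same boundedness assumptions on $\sigma$ that make $\sigma^2$ admissible. (This is the same implicit strengthening already used in Proposition \ref{Prop:QuadVar}, where $f^2$ must lie in Domain$(Q)$.)

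Next, apply Proposition \ref{Prop:AssMg} with $f(x,t) = \sigma^4(x,t)$ to obtain the zero-mean $\Alg_{t}$-martingale
$$m^{\sigma^4}_{t} := \sigma^{4}(x_{t},\gamma(t)) - \sigma^{4}(x,0) - \int_{0}^{t} Q \sigma^{4}(x_{s},\gamma(s))\, ds.$$
Taking expectations and using Fubini to pull the expectation through the time integral yields
$$\Exp\{\sigma^{4}(x_{t},\gamma(t))\} = \sigma^{4}(x,0) + \int_{0}^{t} \Exp\{ Q \sigma^{4}(x_{s},\gamma(s))\}\, ds.$$
Writing $u(t) := \Exp\{\sigma^{4}(x_{t},\gamma(t))\}$, viewed as a function of the initial data $(x,0)$, the Markov property lets us exchange $Q$ and $\Exp$, giving the Cauchy problem $u'(t) = Q u(t),\ u(0) = \sigma^{4}(x,0)$, whose unique solution is $u(t) = e^{tQ} \sigma^{4}(x,0)$, exactly as in Remark \ref{Rem:Expsigma2}.

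The main obstacle is not conceptual but technical: one has to justify identifying $t \mapsto \Exp\{\sigma^{4}(x_{t},\gamma(t))\}$ with the action of the semigroup $(e^{tQ})_{t \geq 0}$ on $\sigma^{4}$ at the point $(x,0)$. This is the backward Kolmogorov representation for $(x_{t},\gamma(t))$, and it relies on the domain condition that allows $Q$ to commute with the expectation under the integral sign. Once this standard semigroup fact is invoked, the claimed identity is immediate.
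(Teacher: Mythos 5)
Your argument is correct, but it takes a genuinely different route from the paper's. You apply Proposition \ref{Prop:AssMg} (Dynkin's formula) directly to $f=\sigma^4$, which obliges you to first argue that $\sigma^4\in\textrm{Domain}(Q)$. The paper instead works only with the martingale $m^{\sigma^2}$ supplied by the stated hypothesis $\sigma^2\in\textrm{Domain}(Q)$: it writes $d\sigma^2(x_t,\gamma(t))=Q\sigma^2(x_t,\gamma(t))\,dt+dm^{\sigma^2}_t$, obtains the dynamics of $\sigma^4=(\sigma^2)^2$ by Ito's lemma, and then substitutes the quadratic-variation formula of Proposition \ref{Prop:QuadVar}, $d\langle m^{\sigma^2}\rangle_t=[Q\sigma^4-2\sigma^2 Q\sigma^2]\,dt$, so that the cross terms cancel and one is left with $d\sigma^4=Q\sigma^4\,dt+2\sigma^2\,dm^{\sigma^2}_t$. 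From that point on, taking expectations and solving the resulting linear equation is identical to your last two steps; both derivations meet at $\Exp\{\sigma^4(x_t,\gamma(t))\}=\sigma^4(x,0)+\int_0^t Q\,\Exp\{\sigma^4(x_s,\gamma(s))\}\,ds$. What your route buys is brevity and a clear identification of the one fact actually needed (the backward-Kolmogorov/semigroup identity for $\sigma^4$); what the paper's route buys is an explicit exhibition of the martingale part as $2\sigma^2\,dm^{\sigma^2}_t$ --- the same device reused in the proof of Theorem \ref{teo:CovSwap} --- while keeping the formal hypothesis at $\sigma^2\in\textrm{Domain}(Q)$. Note, though, that the quadratic-variation formula the paper invokes already presupposes that $Q\sigma^4$ is defined, so your remark about the implicit strengthening is accurate: both arguments in fact require $\sigma^4$ to lie in the domain, and your justification of this from the explicit form of $Q$ (differentiability in $t$ is preserved by squaring, and the jump integral stays finite) is a reasonable, if informal, way to close that gap.
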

\begin{proof} If  $\sigma^{2} \in$ Domain $(Q)$, then from Proposition \ref{Prop:AssMg} we have that
\bea
\label{Eq:sigma2Mg}
m^{\sigma^{2}}_{t} := \sigma^{2}(x_{t},\gamma(t)) - \sigma^{2}(x,0) - \int_{0}^{t} Q\sigma^{2}(x_{s},\gamma(s)) ds
\eea
is a zero-mean martingale with respect to $\Alg_{t} := \sigma \{ x_{s},\gamma(s);0 \leq s \leq t \}$, where $Q$ is the infinitesimal generator of the Markov process $(x_{t},\gamma(t))_{t \in [0,T]}$  (cf. lemma \ref{Lem:GenQ}).  Then $\sigma^{2}$ satisfies the following stochastic differential equation
\bea
d\sigma^{2}(x_{t},\gamma(t)) = Q\sigma^{2}(x_{t},\gamma(t)) dt + d m^{\sigma^{2}}_{t} .
\eea
By applying the Ito's Lemma we obtain
\bea
d\sigma^{4}(x_{t},\gamma(t)) = 2 \sigma^{2}(x_{t},\gamma(t)) d\sigma^{2}(x_{t},\gamma(t)) + d \langle m^{\sigma^{2}}_{\cdot} \rangle_{t} .
\eea
We note that, the quadrating variation of $m^{\sigma^{2}}$ is (see Propositon \ref{Prop:QuadVar}) given by
\bea
 \langle m^{\sigma^{2}} \rangle_{t} := \int_{0}^{t} [Q \sigma^{4}(x_{s},\gamma(s)) - 2 \sigma^{2}(x_{s},\gamma(s)) Q \sigma^{2}(x_{s},\gamma(s)) ] ds .
\eea
Substituting the expression for the quadratic variation $\langle m^{\sigma^{2}} \rangle$ and for $\sigma^{2}$, we obtain that $\sigma^{4}$ satisfies the following stochastic differential equation
\bea
d\sigma^{4}(x_{t},\gamma(t)) =  Q \sigma^{4}(x_{t},\gamma(t)) dt + 2 \sigma^{2}(x_{t},\gamma(t))d m^{\sigma^{2}}_{t} .
\eea
By tacking the expectation of both side of these equation we obtain
\bea
\Exp \{ \sigma^{4}(x_{t},\gamma(t)) \} =  \sigma^{4}(x,0) + \int_{0}^{t} Q \Exp \{ \sigma^{4}(x_{s},\gamma(s)) \} ds .
\eea
Solving this differential equation we finally get
\bea
\Exp \{ \sigma^{4}(x_{t},\gamma(t)) \} = e^{tQ} \sigma^{4}(x,0) .
\eea
\end{proof}

It is known that the market model with semi-Markov stochastic volatility is incomplete, see Swishchuk \cite{Swish}.  In order to price the future contracts we will use the minimal martingale measure, we refer to Swishchuk \cite{Swish} for the details.
Let us now focus on the evaluation of the price of variance and volatility Swaps.

\subsection{Pricing of Variance Swaps}

Let's start from the more straightforward variance swap. Variance swaps are forward contract on future realized level of variance. The payoff of a variance swap with expiration date $T$ is given by
\bea
N(\sigma^{2}_{R} (x) - K_{var})
\eea
where $\sigma^{2}_{R} (x)$ is the realized stock variance over the life of the contract
\bea
\sigma^{2}_{R} (x) := \frac{1}{T} \int_{0}^{T}  \sigma^{2}(x_{s},\gamma(s)) ds ,
\eea
$K_{var}$ is the strike price for variance and $N$ is the notional amounts of dollars per annualized variance point, we will assume that $N = 1$ just for sake of simplicity. The price of the variance swap is the expected present value of the payoff in the risk-neutral world 
\bea
P_{var} (x) = \Exp \{ e^{-rT} (\sigma^{2}_{R} (x) - K_{var}) \} .
\eea
The following result concerns the evaluation of a variance swap in this semi-Markov volatility model. We refer to Swishchuk \cite{Swish} for details and proof.
\begin{teo}
\label{teo:VarSwap}
(Swishchuk \cite{Swish}) The present value of a variance swap for semi-Markov stochastic volatility is
\bea
P_{var} (x) = e^{-rT} \left\{ \frac{1}{T} \int_{0}^{T} (e^{tQ} \sigma^{2}(x,0) - K_{var}) dt \right\}
\eea
where $Q$ is the generator of $(x_{t},\gamma(t))_{t}$, that is
\bea
Q f(x,t) = \frac{d f}{dt}(x,t) + \frac{g_{x}(t)}{\overline{G}_{x}(t)} \int_{X}P(x,dy) [f(y,0) - f(x,t)] .
\eea
\end{teo}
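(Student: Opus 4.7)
The plan is to reduce the theorem to a direct application of Remark \ref{Rem:Expsigma2} after interchanging the order of expectation and time integration. Starting from the definition of the variance swap price
\[
P_{var}(x) = \Exp\{e^{-rT}(\sigma_R^2(x) - K_{var})\} = e^{-rT}\left(\Exp\left\{\frac{1}{T}\int_0^T \sigma^2(x_s,\gamma(s))\,ds\right\} - K_{var}\right),
\]
I would first apply Fubini's theorem to pull the expectation inside the integral, which requires only that $\sigma^2$ is nonnegative (or has integrable absolute value along paths); this is the routine measure-theoretic step and is not the crux.

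Next, I would compute $\Exp\{\sigma^2(x_s,\gamma(s))\}$ by specializing Remark \ref{Rem:Expsigma2} (equivalently Proposition \ref{Prop:Expsigma2}) to $u = 0$. Since we start at $x_0 = x$ with $\gamma(0)=0$, the remark gives $\Exp\{\sigma^2(x_s,\gamma(s))\} = e^{sQ}\sigma^2(x,0)$, where $Q$ is the generator of the joint Markov process $(x_t,\gamma(t))_t$ provided by Lemma \ref{Lem:GenQ}. Substituting this back and absorbing $K_{var}$ into the same time integral yields
\[
P_{var}(x) = e^{-rT}\left\{\frac{1}{T}\int_0^T\bigl(e^{tQ}\sigma^2(x,0) - K_{var}\bigr)\,dt\right\},
\]
which is exactly the claimed formula. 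The explicit form of $Q$ announced in the statement is simply the one recalled in Lemma \ref{Lem:GenQ}, so no new computation is required.

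The main obstacle, such as it is, is conceptual rather than technical: one must recognize that $\sigma^2(x,\gamma)$ must lie in $\mathrm{Domain}(Q)$ for Proposition \ref{Prop:Expsigma2} to apply, and that the expectation is taken under the minimal martingale measure referenced just before the statement (so that the drift term $r$ survives in the discount factor but does not otherwise influence $\sigma$, because $w$ is independent of $(x,\gamma)$). Once these standing assumptions are in force, the proof is essentially a one-line substitution into Fubini plus Remark \ref{Rem:Expsigma2}.
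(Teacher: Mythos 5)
Your proposal is correct and follows exactly the route the paper intends: the paper itself defers the proof to Swishchuk \cite{Swish}, but the machinery it sets up (Proposition \ref{Prop:Expsigma2} and Remark \ref{Rem:Expsigma2}) exists precisely for this Fubini-plus-semigroup substitution, and the proof of Theorem \ref{Th:VolSwap} reuses the identity $\Exp\{\sigma^2_R(x)\} = \frac{1}{T}\int_0^T e^{tQ}\sigma^2(x,0)\,dt$ as already established in just this way. No gaps; your remarks about $\sigma^2 \in \mathrm{Domain}(Q)$ and the minimal martingale measure are the right standing assumptions.
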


\subsection{Pricing of Volatility Swaps}

 Volatility swaps are forward contract on future realized level of volatility. The payoff of a volatility swap with maturity $T$ is given by
\bea
N(\sigma_{R} (x) - K_{vol})
\eea
where $\sigma_{R} (x)$ is the realized stock volatility over the life of the contract
\bea
\sigma_{R} (x) := \sqrt{\frac{1}{T} \int_{0}^{T}  \sigma^{2}(x_{s},\gamma(s)) ds} ,
\eea
$K_{vol}$ is the strike price for volatility and $N$ is the notional amounts of dollars per annualized volatility point, as before we will assume that $N=1$. 
The price of the volatility swap is the expected present value of the payoff in the risk-neutral world 
\bea
P_{vol} (x) = \Exp \{ e^{-rT} (\sigma_{R} (x) - K_{vol}) \} .
\eea
In order to evaluate the volatility swaps we need to know the expected value of the square root of the variance, but unfortunately, in general we are not able to evaluate analytically this expected value.
Then in order to obtain a close formula for the price of volatility swaps we have to make an approximation. Using the same approach of the Markov case (see also Brockhaus and Long \cite{BrockLong} and Javaheri at al. \cite{Java}), from the second order Taylor expansion we have
\bea
\Exp \{ \sqrt{\sigma^{2}_{R} (x)} \} \approx \sqrt{\Exp \{ \sigma^{2}_{R} (x) \} } - \frac{Var \{ \sigma^{2}_{R} (x) \}}{8 \Exp \{ \sigma^{2}_{R} (x) \}^{3/2}} .
\eea
Then, to evaluate the volatility swap price we have to know both expectation and variance of $\sigma^{2}_{R} (x)$. The next result gives an explicit representation of the price of a volatility swap approximated to the second order for this semi-Markov volatility model.

\begin{teo}
\label{Th:VolSwap}
The value of a volatility swap for semi-Markov stochastic volatility is
\bea
 P_{vol} (x) \approx & e^{-rT} \left\{ \sqrt{\frac{1}{T} \int_{0}^{T}  e^{tQ} \sigma^{2}(x,0) dt} - \frac{Var \{ \sigma^{2}_{R} (x) \}}{8 \left( \frac{1}{T} \int_{0}^{T} e^{tQ} \sigma^{2}(x,0) dt \right)^{3/2}} - K_{vol} \right\} \nonumber
\eea
where the variance is given by
\bea
Var \{ \sigma^{2}_{R} (x) \} = \frac{2}{T^{2}} \int^{T}_{0} \int^{t}_{0} \left\{ e^{sQ} \left[ \sigma^{2}(x,0) e^{(t-s)Q} \sigma^{2}(x,0) \right] - \left[ e^{tQ} \sigma^{2}(x,0) \right] \left[ e^{sQ} \sigma^{2}(x,0) \right] \right\} dsdt , \nonumber
\eea
and $Q$ is the generator of $(x_{t},\gamma(t))_{t}$, that is
\bea
Q f(x,t) = \frac{d f}{dt}(x,t) + \frac{g_{x}(t)}{\overline{G}_{x}(t)} \int_{X}P(x,dy) [f(y,0) - f(x,t)] .
\eea
\end{teo}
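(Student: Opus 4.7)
The plan is to start from the risk-neutral pricing formula for the volatility swap and reduce the unknown expectation $\Exp\{\sigma_R(x)\}$ to the already computable quantities $\Exp\{\sigma^2_R(x)\}$ and $Var\{\sigma^2_R(x)\}$ via the Brockhaus--Long second-order Taylor approximation quoted in the introduction. Once this reduction is in place, the theorem is proved provided that both the expectation and the variance of the realized variance $\sigma^2_R(x)=\tfrac1T\int_0^T \sigma^2(x_s,\gamma(s))\,ds$ are expressed in terms of the semigroup $e^{tQ}$ generated by the Markov pair $(x_t,\gamma(t))$.

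For the expectation, I would apply Fubini to pull the expectation inside the time integral and then use Remark \ref{Rem:Expsigma2} (starting at $u=0$ from the deterministic initial state $(x,0)$) to obtain
\[
\Exp\{\sigma^{2}_R(x)\}=\frac{1}{T}\int_{0}^{T} e^{tQ}\sigma^{2}(x,0)\,dt,
\]
which is exactly the quantity sitting inside the square root and inside the $3/2$-power in the statement. Plugging this into the Brockhaus--Long formula already produces the two leading terms of the claimed expression, up to multiplication by the discount factor $e^{-rT}$ and subtraction of $K_{vol}$.

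The nontrivial step is the computation of $Var\{\sigma^{2}_R(x)\}$. Expanding the square and using Fubini twice gives
\[
Var\{\sigma^{2}_R(x)\}=\frac{1}{T^{2}}\int_{0}^{T}\!\!\int_{0}^{T}\!\bigl[\Exp\{\sigma^{2}(x_{t},\gamma(t))\sigma^{2}(x_{s},\gamma(s))\}-\Exp\{\sigma^{2}(x_{t},\gamma(t))\}\Exp\{\sigma^{2}(x_{s},\gamma(s))\}\bigr]\,ds\,dt.
\]
By symmetry of the integrand in $(s,t)$, the domain can be restricted to $\{s\leq t\}$ with a factor of $2$, which will produce the prefactor $2/T^{2}$ appearing in the statement. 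On the restricted region, I would condition on $\Alg_s$ and use the tower property together with the Markov property of $(x_t,\gamma(t))$ from the Remark following Proposition~1; this yields
\[
\Exp\{\sigma^{2}(x_{t},\gamma(t))\mid \Alg_{s}\}=\bigl(e^{(t-s)Q}\sigma^{2}\bigr)(x_{s},\gamma(s)),
\]
so that the cross moment becomes $\Exp\{\sigma^{2}(x_{s},\gamma(s))\cdot (e^{(t-s)Q}\sigma^{2})(x_{s},\gamma(s))\}$. Applying Remark \ref{Rem:Expsigma2} once more to the resulting function of $(x_s,\gamma(s))$, starting from $(x,0)$ at time $0$, gives $e^{sQ}\bigl[\sigma^{2}(x,0)\,e^{(t-s)Q}\sigma^{2}(x,0)\bigr]$. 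Subtracting the product of expectations $[e^{tQ}\sigma^{2}(x,0)]\,[e^{sQ}\sigma^{2}(x,0)]$ yields exactly the integrand in the theorem.

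The main obstacle is this last manipulation: one must keep clean track of the semigroup notation when a product of a pointwise function and a semigroup-evaluated function is again fed into an outer semigroup $e^{sQ}$. A secondary, but purely formal, issue is ensuring that $\sigma^{2}$ and the products $\sigma^{2}\cdot e^{(t-s)Q}\sigma^{2}$ lie in the domain of $Q$ so that Proposition \ref{Prop:Expsigma2} applies; I would invoke the hypothesis of Proposition \ref{Prop:Expsigma4} (namely $\sigma^{2}\in \mathrm{Domain}(Q)$) and the assumption, standard in this setting, that the relevant products remain in the domain. Combining the two computations and substituting into the Brockhaus--Long approximation multiplied by $e^{-rT}$ and shifted by $-K_{vol}$ then reproduces the stated formula.
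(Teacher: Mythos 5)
Your proposal is correct and follows essentially the same route as the paper: the Brockhaus--Long second-order expansion, the expectation of the realized variance via $e^{tQ}$, and the variance computed by symmetrizing the double integral to the region $s\leq t$ with a factor of $2$, conditioning on $\Alg_s$, and pushing the resulting product through $e^{sQ}$. The only difference is cosmetic --- you flag the domain hypotheses explicitly, which the paper leaves implicit.
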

\begin{proof}
We have already obtained the expectation of the realized variance,
\bea
\Exp \{ \sigma^{2}_{R} (x) \} = \frac{1}{T} \int_{0}^{T}  e^{tQ} \sigma^{2}(x,0) dt
\eea
then it remains to prove that
\bea
Var \{ \sigma^{2}_{R} (x) \} = \frac{2}{T^{2}} \int^{T}_{0} \int^{t}_{0} \left\{ e^{sQ} \left[ \sigma^{2}(x,0) e^{tQ} \sigma^{2}(x,0) \right] - \left[ e^{tQ} \sigma^{2}(x,0) \right] \left[ e^{sQ} \sigma^{2}(x,0) \right] \right\} dsdt . \nonumber
\eea
The variance is, from the definition,  given by
\bea
Var \{ \sigma^{2}_{R} (x) \} = \Exp \{ [\sigma^{2}_{R} (x) -  \Exp \{ \sigma^{2}_{R} (x) \}]^{2}\} ,
\eea
using the definition of realized variance, and Fubini theorem, we have
\bea
& & Var \{ \sigma^{2}_{R} (x) \}  =  \Exp \left\{ \left[ \frac{1}{T} \int^{T}_{0} \sigma^{2}(x_{t},\gamma(t)) dt - \frac{1}{T} \int^{T}_{0} \Exp \{ \sigma^{2}(x_{t},\gamma(t)) \} dt \right]^{2} \right\} \nonumber \\
& = & \Exp \left\{ \left[ \frac{1}{T} \int^{T}_{0} \left( \sigma^{2}(x_{t},\gamma(t)) - \Exp \{ \sigma^{2}(x_{t},\gamma(t)) \} \right) dt \right]^{2} \right\}  \\
& = & \frac{1}{T^{2}} \int^{T}_{0} \int^{T}_{0} \Exp \left\{ [ \sigma^{2}(x_{t},\gamma(t)) - \Exp \{ \sigma^{2}(x_{t},\gamma(t)) \} ] [ \sigma^{2}(x_{s},\gamma(s)) - \Exp \{ \sigma^{2}(x_{s},\gamma(s)) \} ]  \right\} ds dt\nonumber
\eea
We note that the integrand is symmetric in the exchange of s and t. 
We can divide the integration on the plan in two areas above and below the graph of t=s, thanks to the symmetry the contribution on the two parts is the same. Then we obtain
\bea
\begin{aligned}
\hspace{-20pt} Var \{ \sigma^{2}_{R} (x) \} = & \frac{2}{T^{2}} \int^{T}_{0} \int^{t}_{0} \Exp \{ [ \sigma^{2}(x_{t},\gamma(t)) - \Exp \{ \sigma^{2}(x_{t},\gamma(t)) \} ] [ \sigma^{2}(x_{s},\gamma(s)) - \Exp \{ \sigma^{2}(x_{s},\gamma(s)) \} ] \} ds dt \\
= & \frac{2}{T^{2}} \int^{T}_{0} \int^{t}_{0} \left[ \Exp \{ \sigma^{2}(x_{t},\gamma(t)) \sigma^{2}(x_{s},\gamma(s)) \} - \Exp \{ \sigma^{2}(x_{t},\gamma(t)) \}  \Exp \{ \sigma^{2}(x_{s},\gamma(s)) \} \right] dsdt .  \nonumber
\end{aligned}
\eea
We would like to stress that, in this representation, the integration set is such that the inequality $s \leq t$ holds true.  Using the property of conditional expectation and proposition \ref{Prop:Expsigma2}, we have
\bea
\hspace{-20pt}Var \{ \sigma^{2}_{R} (x) \} = \frac{2}{T^{2}} \int^{T}_{0} \int^{t}_{0} \left[ \Exp \{ \sigma^{2}(x_{s},\gamma(s)) \Exp \{ \sigma^{2}(x_{t},\gamma(t)) | \Alg_{s} \} \} - \left( e^{tQ} \sigma^{2}(x,0) \right) \left( e^{sQ} \sigma^{2}(x,0) \right) \right] dsdt . \nonumber
\eea
The process $(x_{t},\gamma(t))_{t}$ is Markov, then using remark \ref{Rem:Expsigma2} the conditional expected value in the integrand, can be expressed as
\bea
\Exp \{ \sigma^{2}(x_{t},\gamma(t)) | \Alg_{s} \} = e^{(t-s)Q} \sigma^{2}(x_{s},\gamma(s)) =: g(x_{s},\gamma(s)) .
\eea
Thus, the variance becomes
\bea
Var \{ \sigma^{2}_{R} (x) \} = \frac{2}{T^{2}} \int^{T}_{0} \int^{t}_{0} \left[ \Exp \{ \sigma^{2}(x_{s},\gamma(s)) g(x_{s},\gamma(s))\} - \left( e^{tQ} \sigma^{2}(x,0) \right) \left( e^{sQ} \sigma^{2}(x,0) \right) \right] dsdt . \nonumber
\eea
Solving the expectation on the right hand side we obtain
\bea
Var \{ \sigma^{2}_{R} (x) \} = \frac{2}{T^{2}} \int^{T}_{0} \int^{t}_{0} \left[ e^{sQ} \left( \sigma^{2}(x,0) g(x,0) \right) - \left( e^{tQ} \sigma^{2}(x,0) \right) \left( e^{sQ} \sigma^{2}(x,0) \right) \right] dsdt . \nonumber
\eea
We notice that function g evaluated in $(x,0)$ is simply given by
\bea
g(x,0) = e^{(t-s)Q} \sigma^{2}(x,0) ,
\eea
then substituting in the previous formula the variance finally becomes
\bea
Var \{ \sigma^{2}_{R} (x) \} = \frac{2}{T^{2}} \int^{T}_{0} \int^{t}_{0} \left\{ e^{sQ} \left[ \sigma^{2}(x,0) e^{(t-s)Q} \sigma^{2}(x,0) \right] - \left[ e^{tQ} \sigma^{2}(x,0) \right] \left[ e^{sQ} \sigma^{2}(x,0) \right] \right\} dsdt . \nonumber
\eea
\end{proof}

\subsection{Numerical Evaluation of Variance and Volatility Swaps with Semi-Markov Volatility}

In the application when we attempt to evaluate the price of a variance or a volatility swaps we have to deal with numerical problems. The family of exponential operators $(e^{tQ})_{t}$ involved in theorems \ref{teo:VarSwap} and \ref{Th:VolSwap} for the semi-Markov stochastic volatility model is usual difficult to evaluate from the numerical point of view. To solve this problem, we first look to the following identity
\bea
e^{tQ} f(\cdot) = \sum_{n=0}^{\infty} \frac{(tQ)^{n}}{n!} f(\cdot) ,
\eea
for any function $f \in Domain(Q)$. This identity allow us to obtain the operator $(e^{tQ})_{t}$ at any order of approximation. For example, for $n=1$, we obtain
\bea
e^{tQ}  f(\cdot) \approx (I + tQ) f(\cdot) 
\eea
where $I$ is an identity operator. At this order of approximation we allow semi-Markov process to make at most one transition during the life time of contract. If we think to semi-Markov process as a macroeconomic factor this can be plausible. However we can always evaluate the error in this approximation using the subsequent orders.
Using the first order approximation the variance swap price becomes
\bea
\begin{aligned}
P_{var} (x) \approx & e^{-rT} \left\{ \frac{1}{T} \int_{0}^{T}  (I + tQ) \sigma^{2}(x,0) dt - K_{var} \right\} \\
= & e^{-rT} \left\{ \sigma^{2}(x,0) + \frac{T}{2}Q \sigma^{2}(x,0) - K_{var} \right\} .
\end{aligned}
\eea
Using the same approximation the volatility swap price can be expressed as
\begin{eqnarray*}
\begin{aligned}
P_{vol} (x) \approx & e^{-rT} \left\{ \sqrt{\frac{1}{T} \int_{0}^{T}  (I + tQ) \sigma^{2}(x,0) dt} - \frac{Var \{ \sigma^{2}_{R} (x) \}}{8 \left( \frac{1}{T} \int_{0}^{T} (I + tQ) \sigma^{2}(x,0) dt \right)^{3/2}} - K_{vol} \right\} \\
= & e^{-rT} \left\{ \sqrt{ \sigma^{2}(x,0) + \frac{T}{2}Q \sigma^{2}(x,0)} - \frac{Var \{ \sigma^{2}_{R} (x) \}}{8\left( \sigma^{2}(x,0) + \frac{T}{2}Q \sigma^{2}(x,0) \right)^{3/2}} -  K_{vol} \right\} .
\end{aligned}
\end{eqnarray*}
Here, the variance of realized volatility is given by
\bea
\begin{aligned}
\hspace{-20pt}Var \{ \sigma^{2}_{R} (x) \} \approx & \frac{2}{T^{2}} \int^{T}_{0} \int^{t}_{0} \left\{ (I + sQ) [ \sigma^{2}(x,0) (I + (t-s)Q) \sigma^{2}(x,0) ] \right. \\
- & \left. [(I + tQ) \sigma^{2}(x,0)] [(I + sQ) \sigma^{2}(x,0)] \right\} ds dt .
\end{aligned}
\eea
Solving the product and keeping only the terms up to the first order in $Q$, we obtain
\bea
\begin{aligned}
Var \{ \sigma^{2}_{R} (x) \} \approx & \frac{2}{T^{2}} \int^{T}_{0} \int^{t}_{0} \{ sQ \sigma^{4}(x,0) - 2\sigma^{2}(x,0) sQ \sigma^{2}(x,0) \} ds dt  \\
= & \frac{T}{3} \left\{ Q \sigma^{4}(x,0) - 2\sigma^{2}(x,0) Q \sigma^{2}(x,0) \right\} .
\end{aligned}
\eea
Finally, at the first order of approximation in $Q$ the volatility swap price becomes
\begin{eqnarray*}
\begin{aligned}
P_{vol} (x) \approx e^{-rT} \left\{ \sqrt{ \sigma^{2}(x,0) + \frac{T}{2}Q \sigma^{2}(x,0)} - \frac{T [Q \sigma^{4}(x,0) - 2\sigma^{2}(x,0) Q \sigma^{2}(x,0)]}{24\left( \sigma^{2}(x,0) + \frac{T}{2}Q \sigma^{2}(x,0) \right)^{3/2}} -  K_{vol} \right\} .
\end{aligned}
\end{eqnarray*}

\section{Covariance and Correlation Swaps for a Two Risky Assets in Financial markets with Semi-Markov Stochastic Volatilities}

Let's consider now a market model with two risky assets and one risk free bond. Let's assume that the risky assets are satisfying the following stochastic differential equations
\bea
\left\{ \begin{matrix}
dS^{(1)}_{t} = S^{(1)}_{t} (\mu_{t}^{(1)}dt + \sigma^{(1)}(x_{t},\gamma(t))dw^{(1)}_{t}) \\
 \\
dS^{(2)}_{t} = S^{(2)}_{t} (\mu_{t}^{(2)}dt + \sigma^{(2)}(x_{t},\gamma(t))dw^{(2)}_{t})
\end{matrix}
\right.
\eea
where $\mu^{(1)}, \mu^{(2)}$ are deterministic functions of time, $(w^{(1)}_{t})_{t}$ and $(w^{(2)}_{t})_{t}$ are standard Wiener processes with quadratic covariance given by
\bea
d[w^{(1)}_{t},w^{(2)}_{t}] = \rho_{t} dt ,
\eea 
here $\rho_{t}$ is a deterministic function and $(w^{(1)}_{t})_{t}, (w^{(2)}_{t})_{t}$ are independent of $(x,\gamma)$.  \\
\noindent
In this model it is worth to study the covariance and the correlation swaps between the two risky assets.

\subsection{Pricing of Covariance Swaps}

A covariance swap is a covariance forward contract on the underlying assets $S^{(1)}$ and $S^{(2)}$ which payoff at maturity is equal to
\bea
N(Cov_{R}(S^{(1)},S^{(2)}) - K_{cov})
\eea
 where $K_{cov}$ is a strike reference value, $N$ is the notional amount and $Cov_{R}(S^{(1)},S^{(2)})$ is the realized covariance of the two assets $S^{(1)}$ and $S^{(2)}$ given by
 \bea
 Cov_{R}(S^{(1)},S^{(2)}) = \frac{1}{T} [\ln S^{(1)}_{T}, \ln S^{(2)}_{T}] = \frac{1}{T} \int_{0}^{T} \rho_{t} \sigma^{(1)}(x_{t},\gamma(t)) \sigma^{(2)}(x_{t},\gamma(t)) dt .
 \eea
 The price of the covariance swap is the expected present value of the payoff in the risk neutral world
\bea
P_{cov}(x) = \Exp \{ e^{-rT} (Cov_{R}(S^{(1)},S^{(2)}) - K_{cov}) \} ,
\eea
here we set $N=1$. The next result provides us an explicit representation of the covariance swap price.
\begin{teo}
\label{teo:CovSwap}
The value of a covariance swap for semi-Markov stochastic volatility is
\bea
P_{cov}(x) = e^{-rT} \left\{ \frac{1}{T} \int_{0}^{T} \rho_{t} e^{tQ} [\sigma^{(1)}(x,0) \sigma^{(2)}(x,0)] dt - K_{cov} \right\} ,
\eea
where $Q$ is the generator of $(x_{t},\gamma(t))_{t}$, that is
\bea
Q f(x,t) = \frac{d f}{dt}(x,t) + \frac{g_{x}(t)}{\overline{G}_{x}(t)} \int_{X}P(x,dy) [f(y,0) - f(x,t)] .
\eea
\end{teo}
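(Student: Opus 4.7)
The plan is to reduce the proof to the same scheme already used for the variance swap in Theorem \ref{teo:VarSwap}, by observing that the covariance integrand $\rho_{t}\sigma^{(1)}(x_{t},\gamma(t))\sigma^{(2)}(x_{t},\gamma(t))$ has the product $\sigma^{(1)}\sigma^{(2)}$ in the role previously played by $\sigma^{2}$.

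First I would substitute the realized covariance formula into the definition of $P_{cov}(x)$ and use Fubini's theorem to interchange expectation and the time integral. Since $\rho_{t}$ is a deterministic function, this leaves us to evaluate
\[
\Exp\{\sigma^{(1)}(x_{t},\gamma(t))\sigma^{(2)}(x_{t},\gamma(t))\}
\]
for each $t\in[0,T]$, after which the result reassembles immediately with the constant $-K_{cov}$ multiplied out of the inner expectation and $e^{-rT}$ pulled out in front.

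Next, setting $h(x,t):=\sigma^{(1)}(x,t)\sigma^{(2)}(x,t)$ and assuming $h\in\mathrm{Domain}(Q)$, I would invoke the martingale associated with the Markov process $(x_{t},\gamma(t))$, namely Lemma \ref{Lem:GenQ} applied to $h$: the process $m^{h}_{t}=h(x_{t},\gamma(t))-h(x,0)-\int_{0}^{t}Qh(x_{s},\gamma(s))\,ds$ is a zero-mean martingale with respect to $\Alg_{t}$. Taking expectations of both sides and using Fubini on the time integral yields the integral equation
\[
\Exp\{h(x_{t},\gamma(t))\}=h(x,0)+\int_{0}^{t}Q\,\Exp\{h(x_{s},\gamma(s))\}\,ds,
\]
exactly in parallel with Proposition \ref{Prop:Expsigma2} and Remark \ref{Rem:Expsigma2}. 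Solving this equation in semigroup form gives $\Exp\{h(x_{t},\gamma(t))\}=e^{tQ}h(x,0)=e^{tQ}[\sigma^{(1)}(x,0)\sigma^{(2)}(x,0)]$.

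Substituting this identity into the Fubini expression from the first step produces the announced formula. The main technical point is the domain condition: one must ensure that the product $\sigma^{(1)}\sigma^{(2)}$ lies in $\mathrm{Domain}(Q)$, which is a genuine restriction since $\mathrm{Domain}(Q)$ is not in general an algebra; under the smoothness hypotheses used throughout the paper for single volatility functions (analogous to $\sigma^{2}\in\mathrm{Domain}(Q)$ in Proposition \ref{Prop:Expsigma4}), this is harmless and essentially the only nontrivial ingredient beyond the computation itself. The independence of the Wiener processes from $(x,\gamma)$ and the deterministic nature of $\rho_{t}$ ensure that no cross terms from the stochastic integrals contribute, so the argument is a direct semi-Markov analogue of the variance-swap derivation.
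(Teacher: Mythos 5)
Your argument is correct, but it takes a more direct route than the paper. You apply the Dynkin-type martingale of Proposition \ref{Prop:AssMg} (note: not Lemma \ref{Lem:GenQ}, which gives the quadratic variation, not the martingale itself) directly to the single function $h=\sigma^{(1)}\sigma^{(2)}$ of the Markov process $(x_{t},\gamma(t))$, obtaining $\Exp\{h(x_{t},\gamma(t))\}=h(x,0)+\int_{0}^{t}Q\,\Exp\{h(x_{s},\gamma(s))\}\,ds$ in one step and solving in semigroup form. The paper instead builds the dynamics of the product from the dynamics of the factors: it writes $d\sigma^{(i)}=Q\sigma^{(i)}dt+dm^{\sigma^{(i)}}$ for $i=1,2$, applies It\^{o}'s lemma to the product, and invokes the quadratic covariation formula of Proposition \ref{Prop:QuadCovar} to show that the cross-variation term combines with the drift terms to leave exactly $Q(\sigma^{(1)}\sigma^{(2)})dt$ plus martingale increments, after which it takes expectations and solves the same ODE. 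The two routes impose the same hypothesis --- $\sigma^{(1)}\sigma^{(2)}\in\mathrm{Domain}(Q)$, which you rightly flag as the genuine restriction and which Proposition \ref{Prop:QuadCovar} also requires --- and yield the same identity $\Exp\{\sigma^{(1)}(x_{t},\gamma(t))\sigma^{(2)}(x_{t},\gamma(t))\}=e^{tQ}[\sigma^{(1)}(x,0)\sigma^{(2)}(x,0)]$. What your version buys is brevity and a cleaner logical dependence (only the generator--martingale correspondence is needed); what the paper's version buys is an explicit display of how the covariation between the two volatility martingales enters and cancels, in parallel with the fourth-moment computation of Proposition \ref{Prop:Expsigma4}, which is the structure reused later for the correlation swap. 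Your remark that the Wiener processes and $\rho_{t}$ play no role beyond Fubini is also consistent with the paper, where the realized covariance is already expressed as $\frac{1}{T}\int_{0}^{T}\rho_{t}\sigma^{(1)}\sigma^{(2)}dt$ before any expectation is taken.
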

\begin{proof}
To evaluate the price of covariance swap we need to know
\bea
\Exp \{ Cov_{R}(S^{(1)},S^{(2)}) \} = \frac{1}{T} \int_{0}^{T} \rho_{t} \Exp \{ \sigma^{(1)}(x_{t},\gamma(t)) \sigma^{(2)}(x_{t},\gamma(t)) \} dt .
\eea
It remains to prove that
\bea
\Exp \{ \sigma^{(1)}(x_{t},\gamma(t)) \sigma^{(2)}(x_{t},\gamma(t)) \} = e^{tQ} [\sigma^{(1)}(x,0) \sigma^{(2)}(x,0)] .
\eea
By applying the Ito's lemma we have
\bea
\label{eq:teo3.1}
\begin{aligned}
d(\sigma^{(1)}(x_{t},\gamma(t)) \sigma^{(2)}(x_{t},\gamma(t))) = & \sigma^{(1)}(x_{t},\gamma(t)) d\sigma^{(2)}(x_{t},\gamma(t)) + \sigma^{(2)}(x_{t},\gamma(t)) d\sigma^{(1)}(x_{t},\gamma(t)) \\
+& d \langle \sigma^{(1)}(x_{\cdot},\gamma(\cdot)), \sigma^{(2)}(x_{\cdot},\gamma(\cdot)) \rangle_{t} .
\end{aligned}
\eea
Using proposition \ref{Prop:QuadCovar} we obtain
\bea
\label{eq:teo3.2}
\begin{aligned}
 d \langle \sigma^{(1)}(x_{\cdot},\gamma(\cdot)), \sigma^{(2)}(x_{\cdot},\gamma(\cdot)) \rangle_{t} =  &
 Q (\sigma^{(1)}(x_{t},\gamma(t)) \sigma^{(2)}(x_{t},\gamma(t))) dt  \\
 - & [ \sigma^{(1)}(x_{t},\gamma(t)) Q \sigma^{(2)}(x_{t},\gamma(t)) + \sigma^{(2)}(x_{t},\gamma(t)) Q \sigma^{(1)}(x_{t},\gamma(t))] dt .
 \end{aligned}
\eea
Furthermore we have
\bea
\label{eq:teo3.3}
d  \sigma^{(i)}(x_{t},\gamma(t)) = Q  \sigma^{(i)}(x_{t},\gamma(t)) dt + dm^{ \sigma^{(i)}} \qquad \qquad i = 1,2 .
\eea
Substituting (\ref{eq:teo3.2}) and (\ref{eq:teo3.3}) in equation (\ref{eq:teo3.1}) we get
\bea
 \begin{aligned}
d(\sigma^{(1)}(x_{t},\gamma(t)) \sigma^{(2)}(x_{t},\gamma(t))) = & Q (\sigma^{(1)}(x_{t},\gamma(t)) \sigma^{(2)}(x_{t},\gamma(t))) dt  \\
+ & \sigma^{(1)}(x_{t},\gamma(t)) dm^{\sigma^{(2)}} + \sigma^{(2)}(x_{t},\gamma(t)) dm^{\sigma^{(1)}}.
\end{aligned}
\eea
Taking the expectation on both side we can rewrite the above equation as
\bea
\begin{aligned}
\Exp \{ \sigma^{(1)}(x_{t},\gamma(t)) \sigma^{(2)}(x_{t},\gamma(t)) \} =  &\sigma^{(1)}(x,0) \sigma^{(2)}(x,0) \\
+ & \int_{0}^{t} Q \Exp \{ \sigma^{(1)}(x_{s},\gamma(s)) \sigma^{(2)}(x_{s},\gamma(s)) \} dt .
\end{aligned}
\eea
Solving this differential equation we obtain
\bea
\Exp \{ \sigma^{(1)}(x_{t},\gamma(t)) \sigma^{(2)}(x_{t},\gamma(t)) \} = e^{tQ} [\sigma^{(1)}(x,0) \sigma^{(2)}(x,0)] ,
\eea
this conclude the proof.
\end{proof}

\subsection{Pricing of Correlation Swaps}

A correlation swap is a forward contract on the correlation between the underlying assets $S^{1}$ and $S^{2}$ which payoff at maturity is equal to
\bea
N(Corr_{R}(S^{1},S^{2}) - K_{corr})
\eea
where $K_{corr}$ is a strike reference level, $N$ is the notional amount and $Corr_{R}(S^{1},S^{2})$ is the realized correlation defined by
\bea
Corr_{R}(S^{1},S^{2}) = \frac{Cov_{R}(S^{1},S^{2})}{\sqrt{\sigma^{(1)^{2}}_{R}(x)}\sqrt{\sigma^{(2)^{2}}_{R} (x)}} ,
\eea
here the realized variance is given by
\bea
\sigma^{(i)^{2}}_{R}(x) = \frac{1}{T} \int_{0}^{T} (\sigma^{(i)}(x_{t},\gamma(t)))^{2}dt \qquad \qquad i = 1,2 .
\eea
The price of the correlation swap is the expected present value of the payoff in the risk neutral world, that is
\bea
P_{corr}(x) = \Exp \{ e^{-rT} (Corr_{R}(S^{1},S^{2}) - K_{corr}) \}
\eea
where we set $N=1$ for simplicity. Unfortunately the expected value of $Corr_{R}(S^{1},S^{2})$ is not known analytically. Thus, in order to obtain an explicit formula for the correlation swap price, we have to make some approximation.
\subsubsection{Correlation Swap made simple}
\label{Sec:CorrSimple}
First of all, let introduce the following notations
\bea
\begin{aligned}
X = & Cov_{R}(S^{1},S^{2}) \\
Y = & \sigma^{(1)^{2}}_{R}(x) \\
Z = & \sigma^{(2)^{2}}_{R}(x) ,
\end{aligned}
\eea
and with the subscript 0 we will denote the expected value of the above random variables. 
Following the approach frequently used for the volatility swap, we would like to approximate the square root of $Y$ and $Z$ at the first order as follows
\bea
\begin{aligned}
\sqrt{Y} \approx & \sqrt{Y_{0}} + \frac{Y - Y_{0}}{2\sqrt{Y_{0}}} \\
\sqrt{Z} \approx & \sqrt{Z_{0}} + \frac{Z - Z_{0}}{2\sqrt{Z_{0}}} .
\end{aligned}
\eea
The realized correlation can now be approximated by
\bea
Corr_{R}(S^{1},S^{2}) \approx \frac{X}{\left( \sqrt{Y_{0}} + \frac{Y - Y_{0}}{2\sqrt{Y_{0}}} \right) \left( \sqrt{Z_{0}} + \frac{Z - Z_{0}}{2\sqrt{Z_{0}}} \right)} = \frac{\frac{X}{\sqrt{Y_{0}} \sqrt{Z_{0}}}}{\left( 1 + \frac{Y - Y_{0}}{2Y_{0}} \right) \left( 1 + \frac{Z - Z_{0}}{2Z_{0}} \right)} .
\eea
Solving the product in the denominator on the right hand side last term and keeping only the terms up to the first order in the increment, we have
\bea
Corr_{R}(S^{1},S^{2}) \approx  \frac{\frac{X}{\sqrt{Y_{0}} \sqrt{Z_{0}}}}{1 + \left( \frac{Y - Y_{0}}{2Y_{0}} + \frac{Z - Z_{0}}{2Z_{0}} \right)} \approx \frac{X}{\sqrt{Y_{0}} \sqrt{Z_{0}}} \left[ 1 - \left( \frac{Y - Y_{0}}{2Y_{0}} + \frac{Z - Z_{0}}{2Z_{0}} \right) \right] .
\eea
In what follows, we will consider only the zero order of approximation, we discuss the first correction on appendix. Here, we are going to approximate the realized correlation as
\bea
Corr_{R}(S^{1},S^{2}) \approx  \frac{X}{\sqrt{Y_{0}} \sqrt{Z_{0}}}
\eea
Substituting $X$, $Y$ and $Z$ we obtain
\bea
\label{Eq:CorrApprox}
\begin{aligned}
Corr_{R}(S^{1},S^{2}) \approx \frac{1}{\sqrt{\Exp \{  \sigma^{(1)^{2}}_{R}(x) \}} \sqrt{\Exp \{  \sigma^{(2)^{2}}_{^{2}R}(x) \}}} \frac{1}{T}\int_{0}^{T} \rho_{t} \sigma^{(1)}(x_{t},\gamma(t)) \sigma^{(2)}(x_{t},\gamma(t)) dt 
\end{aligned}
\eea
where (cf. teorem \ref{teo:VarSwap}), we have
\bea
\Exp \left\{  \sigma^{(i)^{2}}_{R}(x) \right\} = \Exp \left\{  \frac{1}{T}\int_{0}^{T} (\sigma^{(i)}(x_{t},\gamma(t)))^{2} dt \right\} = \frac{1}{T}\int_{0}^{T} e^{tQ} (\sigma^{(i)}(x,0))^{2} dt ,
\eea
for $i = 1, 2$. In order to price a correlation swap we have to be able to evaluate the expectation of both side of equation (\ref{Eq:CorrApprox}), the expectation of the right hand side becomes
\bea
\begin{aligned}
\Exp \left\{ \sigma^{(1)}(x_{t},\gamma(t)) \sigma^{(2)}(x_{t},\gamma(t)) \right\} = e^{tQ} \sigma^{(1)}(x,0) \sigma^{(2)}(x,0) .
\end{aligned}
\eea
We can summarize the previous result in the following statement.
\begin{teo}
The value of a correlation swap for semi-Markov stochastic volatility is
\bea
P_{corr}(x) \approx e^{-rT} \left\{ \frac{ \int_{0}^{T} \rho_{t} e^{tQ} [\sigma^{(1)}(x,0) \sigma^{(2)}(x,0)] dt}{\sqrt{\int_{0}^{T} e^{tQ} (\sigma^{(1)}(x,0))^{2} dt} \sqrt{\int_{0}^{T} e^{tQ} (\sigma^{(2)}(x,0))^{2} dt}} - K_{corr} \right\} ,
\eea
where $Q$ is the generator of $(x_{t},\gamma(t))_{t}$, that is
\bea
Q f(x,t) = \frac{d f}{dt}(x,t) + \frac{g_{x}(t)}{\overline{G}_{x}(t)} \int_{X}P(x,dy) [f(y,0) - f(x,t)] .
\eea
\end{teo}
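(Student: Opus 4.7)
The plan is to build directly on the zero-order approximation of the realized correlation already established in Section~\ref{Sec:CorrSimple}. Writing $X = Cov_{R}(S^{1},S^{2})$, $Y = \sigma^{(1)^{2}}_{R}(x)$, $Z = \sigma^{(2)^{2}}_{R}(x)$ with $Y_{0}=\Exp\{Y\}$ and $Z_{0}=\Exp\{Z\}$, that section gives
$$Corr_{R}(S^{1},S^{2}) \approx \frac{X}{\sqrt{Y_{0}}\sqrt{Z_{0}}}.$$
Since $Y_{0}$ and $Z_{0}$ are deterministic, they come out of any expectation, so the correlation swap price reduces at this order to
$$P_{corr}(x) \approx e^{-rT}\left\{\frac{\Exp\{Cov_{R}(S^{1},S^{2})\}}{\sqrt{Y_{0}}\,\sqrt{Z_{0}}}-K_{corr}\right\}.$$

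Next I would evaluate the three expectations separately. For the numerator, Theorem~\ref{teo:CovSwap} provides exactly
$$\Exp\{Cov_{R}(S^{1},S^{2})\} = \frac{1}{T}\int_{0}^{T}\rho_{t}\,e^{tQ}\bigl[\sigma^{(1)}(x,0)\sigma^{(2)}(x,0)\bigr]dt.$$
For the two denominators, applying Proposition~\ref{Prop:Expsigma2} (equivalently, the calculation underlying Theorem~\ref{teo:VarSwap}) to each squared volatility yields
$$Y_{0}=\frac{1}{T}\int_{0}^{T}e^{tQ}(\sigma^{(1)}(x,0))^{2}dt,\qquad Z_{0}=\frac{1}{T}\int_{0}^{T}e^{tQ}(\sigma^{(2)}(x,0))^{2}dt.$$

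Finally I would substitute these three expressions into the displayed approximation. The three factors of $1/T$ cancel exactly: the one in the numerator against $\sqrt{1/T}\cdot\sqrt{1/T}=1/T$ coming from the two square roots in the denominator, yielding the formula in the statement. There is essentially no obstacle here, because all the analytic heavy lifting has already been done in Theorem~\ref{teo:CovSwap} and in the martingale computation behind Proposition~\ref{Prop:Expsigma2}; the only subtlety worth pointing out is that the final equality is an approximation rather than an identity, inherited from the zero-order Taylor expansion of $\sqrt{Y}$ and $\sqrt{Z}$ around $Y_{0}$ and $Z_{0}$ used in Section~\ref{Sec:CorrSimple}. The first-order refinement, which would add a correction term built from the variances and covariance of $X,Y,Z$, is postponed to the appendix.
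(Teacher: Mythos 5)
Your proposal is correct and follows essentially the same route as the paper: the text preceding the theorem in Section \ref{Sec:CorrSimple} likewise truncates the Taylor expansion of $\sqrt{Y}$ and $\sqrt{Z}$ at zero order, pulls the deterministic $Y_{0}$ and $Z_{0}$ out of the expectation, and then invokes the covariance computation $\Exp \{ \sigma^{(1)}(x_{t},\gamma(t)) \sigma^{(2)}(x_{t},\gamma(t)) \} = e^{tQ} [\sigma^{(1)}(x,0) \sigma^{(2)}(x,0)]$ from Theorem \ref{teo:CovSwap} together with the variance-swap formula for $\Exp \{ \sigma^{(i)^{2}}_{R}(x) \}$, with the $1/T$ factors cancelling exactly as you describe. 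Your explicit remark that the result is only an approximation inherited from the truncation, with the first-order correction deferred to the appendix, matches the paper's own treatment.
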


\section{Numerical Evaluation of Covariance and Correlation Swaps with Semi-Markov Stochastic Volatility}

In order to obtain a more handy expression for the price of covariance and correlation swaps to use in the application, we will introduce here an approximation for the family of operator $(e^{tQ})_{t}$. Following the approach used for the variance and volatility case, we are going to approximate the operators at the first order in Q as
\bea
e^{tQ} f(\cdot) \approx (I + tQ)f(\cdot).
\eea
Using this approximation the covariance swap price becomes
\bea
\begin{aligned}
P_{cov}(x) \approx & e^{-rT} \left\{ \frac{1}{T} \int_{0}^{T} \rho_{t} (I + tQ) [\sigma^{(1)}(x,0) \sigma^{(2)}(x,0)] dt - K_{cov} \right\} \\
= & e^{-rT} \left\{ \sigma^{(1)}(x,0) \sigma^{(2)}(x,0) \int_{0}^{T} \rho_{t} dt + Q [\sigma^{(1)}(x,0) \sigma^{(2)}(x,0)]  \int_{0}^{T} t \rho_{t} dt - K_{cov} \right\} .
\end{aligned}
\eea
The same approximation allow us to express the correlation swap price as
 \bea
\begin{aligned}
P_{corr}(x) \approx & e^{-rT} \left\{ \frac{ \int_{0}^{T} \rho_{t} (I + tQ) [\sigma^{(1)}(x,0) \sigma^{(2)}(x,0)] dt}{\sqrt{\int_{0}^{T} (I + tQ) (\sigma^{(1)}(x,0))^{2} dt} \sqrt{\int_{0}^{T} (I + tQ) (\sigma^{(2)}(x,0))^{2} dt}} - K_{corr} \right\} \\
= & e^{-rT} \left\{ \frac{ \sigma^{(1)}(x,0) \sigma^{(2)}(x,0)  \int_{0}^{T} \rho_{t} dt + Q [ \sigma^{(1)}(x,0) \sigma^{(2)}(x,0)] \int_{0}^{T} t \rho_{t}  dt}{\sqrt{(\sigma^{(1)}(x,0))^{2} + \frac{T}{2} Q(\sigma^{(1)}(x,0))^{2}} \sqrt{(\sigma^{(2)}(x,0))^{2} + \frac{T}{2} Q(\sigma^{(2)}(x,0))^{2}}} - K_{corr} \right\}.
\end{aligned}
\eea
 
\clearpage

{\Huge {\bf Appendix}}

\vspace{20pt}

\appendix

\section{Correlation Swaps: First Order Correction}

We would like to obtain an approximation for the realized correlation between two risky assets
bea
\bea
Corr_{R}(S^{1},S^{2}) = \frac{Cov_{R}(S^{1},S^{2})}{\sqrt{\sigma^{(1)^{2}}_{R}(x)}\sqrt{\sigma^{(2)^{2}}_{R} (x)}} .
\eea
In section \ref{Sec:CorrSimple} we have already obtained the following approximated expression
\bea
\label{Eq:App.1}
Corr_{R}(S^{1},S^{2}) \approx  \frac{\frac{X}{\sqrt{Y_{0}} \sqrt{Z_{0}}}}{1 + \left( \frac{Y - Y_{0}}{2Y_{0}} + \frac{Z - Z_{0}}{2Z_{0}} \right)} \approx \frac{X}{\sqrt{Y_{0}} \sqrt{Z_{0}}} \left[ 1 - \left( \frac{Y - Y_{0}}{2Y_{0}} + \frac{Z - Z_{0}}{2Z_{0}} \right) \right] .
\eea
where
\bea
\begin{aligned}
X = & Cov_{R}(S^{1},S^{2}) \\
Y = & \sigma^{(1)^{2}}_{R}(x) \\
Z = & \sigma^{(2)^{2}}_{R}(x) ,
\end{aligned}
\eea
and with the pedix $0$ we have denoted the expected values. We have already evaluated the expectation of the zero order approximation, now we would like to evaluate the first order. \\
Substituting $X$, $Y$ and $Z$ in equation (\ref{Eq:App.1}) we obtain
\bea
\label{Eq:App.2}
\begin{aligned}
Corr_{R}(S^{1},S^{2}) \approx & \frac{1}{\sqrt{\Exp \{  \sigma^{(1)^{2}}_{R}(x) \}} \sqrt{\Exp \{  \sigma^{(2)^{2}}_{^{2}R}(x) \}}} \frac{1}{T}\int_{0}^{T} \rho_{t} \sigma^{(1)}(x_{t},\gamma(t)) \sigma^{(2)}(x_{t},\gamma(t)) dt \\
- & \frac{1}{2 T^{2} (\Exp \{  \sigma^{(1)^{2}}_{R}(x) \} )^{3/2} (\Exp \{  \sigma^{(2)^{2}}_{R}(x) \})^{3/2}} \int_{0}^{T} \rho_{t} \sigma^{(1)}(x_{t},\gamma(t)) \sigma^{(2)}(x_{t},\gamma(t)) dt \\
\times & \left\{ \Exp \{  \sigma^{(2)^{2}}_{R}(x) \} \int_{0}^{T} [ (\sigma^{(1)}(x_{s},\gamma(s)))^{2} - \Exp \{ (\sigma^{(1)}(x_{s},\gamma(s)))^{2} \} ] ds \right. \\
+ & \left. \Exp \{  \sigma^{(1)^{2}}_{R}(x) \} \int_{0}^{T} [ (\sigma^{(2)}(x_{u},\gamma(u)))^{2} - \Exp \{ (\sigma^{(2)}(x_{u},\gamma(u)))^{2} \} ] du \right\} ,
\end{aligned}
\eea
where
\bea
\Exp \left\{  \sigma^{2}_{(i)R}(x) \right\} = \Exp \left\{  \frac{1}{T}\int_{0}^{T} (\sigma^{(i)}(x_{t},\gamma(t)))^{2} dt \right\} = \frac{1}{T}\int_{0}^{T} e^{tQ} (\sigma^{(i)}(x,0))^{2} dt ,
\eea
for $i = 1, 2$. We have to evaluate the expectation of the right hand side of equation (\ref{Eq:App.2}). We already calculated the expectation of the first term, which is the zero order approximation for the realized correlation.  Then we will focus now on the other terms. First of all, let rewrite them as follows
\bea
\begin{aligned}
\int_{0}^{T} \int_{0}^{T} & \rho_{t} \sigma^{(1)}(x_{t},\gamma(t)) \sigma^{(2)}(x_{t},\gamma(t))\left( \Exp \{  \sigma^{(2)^{2}}_{R}(x) \} [ (\sigma^{(1)}(x_{s},\gamma(s)))^{2} - \Exp \{ (\sigma^{(1)}(x_{s},\gamma(s)))^{2} \} ] \right. \\
+ & \left. \Exp \{  \sigma^{(1)^{2}}_{R}(x) \} [(\sigma^{(2)}(x_{s},\gamma(s)))^{2} - \Exp \{ (\sigma^{(2)}(x_{s},\gamma(s)))^{2} \} ] \right) ds dt ,
\end{aligned}
\eea
we have four different contributions in the integrals, the expectation of the terms
\bea
\int_{0}^{T} \int_{0}^{T} \rho_{t} \sigma^{(1)}(x_{t},\gamma(t)) \sigma^{(2)}(x_{t},\gamma(t)) \Exp \{ \sigma^{(i)^{2}}(x_{s},\gamma(s)) \} \Exp \{ \sigma^{(-i)^{2}}(x_{s},\gamma(s)) \} dsdt
\eea
for $i=1,2$, can be evaluate using theorem  \ref{teo:CovSwap}. Then, in order to evaluate the expectation of the approximated realized correlation, it only remains to calculate
\bea
\Exp \left\{ \int_{0}^{T} \int_{0}^{T} \rho_{t} \sigma^{(1)}(x_{t},\gamma(t)) \sigma^{(2)}(x_{t},\gamma(t)) \sigma^{(i)^{2}}(x_{s},\gamma(s)) dsdt \right\} \qquad \qquad i=1,2
\eea
To this end, let's first divide the range of integration in two intervals as follows
\bea
\begin{aligned}
& \Exp \left\{ \int_{0}^{T} \int_{0}^{t} \rho_{t} \sigma^{(1)}(x_{t},\gamma(t)) \sigma^{(2)}(x_{t},\gamma(t)) \sigma^{(i)^{2}}(x_{s},\gamma(s)) dsdt \right. \\
+ & \left. \int_{0}^{T} \int_{t}^{T} \rho_{t} \sigma^{(1)}(x_{t},\gamma(t)) \sigma^{(2)}(x_{t},\gamma(t)) \sigma^{(i)^{2}}(x_{s},\gamma(s)) dsdt \right\}
\end{aligned}
\eea
for $i = 1,2$. We notice that the first integral set is such that $t>s$ while the second has $t<s$. We can now use the property of conditional expectation to obtain
\bea
\label{Eq:App.3}
\begin{aligned}
& \Exp \left\{ \int_{0}^{T} \int_{0}^{t} \rho_{t} \Exp \{ \sigma^{(1)}(x_{t},\gamma(t)) \sigma^{(2)}(x_{t},\gamma(t)) | \Alg_{s} \} \sigma^{(i)^{2}}(x_{s},\gamma(s)) dsdt \right. \\
+ & \left. \int_{0}^{T} \int_{t}^{T} \rho_{t} \sigma^{(1)}(x_{t},\gamma(t)) \sigma^{(2)}(x_{t},\gamma(t)) \Exp \{ \sigma^{(i)^{2}}(x_{s},\gamma(s)) | \Alg_{t} \} dsdt \right\} .
\end{aligned}
\eea
We notice that $(x_{t},\gamma(t))_{t}$ is a Markov process then using the Markov property, we can express the conditional expectations as
\bea
\Exp \{ \sigma^{(1)}(x_{t},\gamma(t)) \sigma^{(2)}(x_{t},\gamma(t)) | \Alg_{s} \} = e^{(t-s)Q}\sigma^{(1)}(x_{s},\gamma(s)) \sigma^{(2)}(x_{s},\gamma(s)) =: h(x_{s},\gamma(s)) \nonumber
\eea
for $t>s$, and
\bea
 \Exp \{ \sigma^{(i)^{2}}(x_{s},\gamma(s)) | \Alg_{t} \} = e^{(s-t)Q} \sigma^{(i)^{2}}(x_{t},\gamma(t)) =: g^{(i)}(x_{t},\gamma(t)) \nonumber
 \eea
 for $s>t$. Therefore, the first term of eq. (\ref{Eq:App.3}) can be expressed as
  \bea
 \Exp \left\{ \int_{0}^{T} \int_{0}^{t} \rho_{t} h(x_{s},\gamma(s)) \sigma^{(i)^{2}}(x_{s},\gamma(s)) dsdt \right\} = \int_{0}^{T} \int_{0}^{t} \rho_{t} e^{sQ} [ h(x,0) \sigma^{(i)^{2}}(x,0)] dsdt , 
\eea 
while the second as
\bea
\begin{aligned}
& \Exp \left\{ \int_{0}^{T} \int_{t}^{T} \rho_{t} \sigma^{(1)}(x_{t},\gamma(t)) \sigma^{(2)}(x_{t},\gamma(t)) g^{(i)}(x_{t},\gamma(t)) dsdt \right\} \\
 = & \int_{0}^{T} \int_{t}^{T} \rho_{t} e^{tQ} [ \sigma^{(1)}(x,0) \sigma^{(2)}(x,0) g^{(i)}(x,0) ] dsdt .
 \end{aligned}
\eea
Now, we can evaluate the functions $h$ and $g$ at x obtaining 
\bea
h(x,0) = e^{(t-s)Q} [ \sigma^{(1)}(x,0) \sigma^{(2)}(x,0) ]
\eea
and
\bea
g^{(i)}(x,0) = e^{(s-t)Q} [ \sigma^{(i)^{2}}(x,0) ] .
\eea
We can summarize the previous result in the following statement which gives the correlation swap price up to the first order of approximation.
\begin{teo}
The value of the correlation swap for a semi-Markov volatility is
\bea
P_{corr}(x) = e^{-rT} \left( \Exp \{ Corr_{R}(S^{1},S^{2}) \} - K_{corr} \right)
\eea
where the realized correlation can be approximated by
\bea
\begin{aligned}
& \hspace{-20pt}\Exp \{ Corr_{R}(S^{1},S^{2}) \} \approx  \frac{ 2 \int_{0}^{T} \rho_{t} e^{tQ} \sigma^{(1)}(x,0) \sigma^{(2)}(x,0) dt}{\sqrt{\int_{0}^{T} e^{tQ} (\sigma^{(1)}(x,0))^{2} dt} \sqrt{\int_{0}^{T} e^{tQ} (\sigma^{(2)}(x,0))^{2} dt}} \\
& \hspace{-20pt} - \frac{\int_{0}^{T}\rho_{t} ( \int_{0}^{t} e^{sQ} \{ e^{tQ} [ \sigma^{(1)}(x,0) \sigma^{(2)}(x,0) ]   \sigma^{(1)^{2}}(x,0) \} ds + \int_{t}^{T} e^{tQ} \{ \sigma^{(1)}(x,0) \sigma^{(2)}(x,0) e^{uQ} [ \sigma^{(1)^{2}}(x,0) ] \} du)dt }{2 \left( \int_{0}^{T} e^{tQ} (\sigma^{(1)}(x,0))^{2} dt \right)^{3/2} \left( \int_{0}^{T} e^{tQ} (\sigma^{(2)}(x,0))^{2} dt \right)^{1/2}} \\
& \hspace{-20pt} - \frac{\int_{0}^{T}\rho_{t} ( \int_{0}^{t} e^{sQ} \{ e^{tQ} [ \sigma^{(1)}(x,0) \sigma^{(2)}(x,0) ]   \sigma^{(2)^{2}}(x,0) \} ds 
+ \int_{t}^{T} e^{tQ} \{ \sigma^{(1)}(x,0) \sigma^{(2)}(x,0) e^{uQ} [ \sigma^{(2)^{2}}(x,0) ] \} du ) dt }{2 \left( \int_{0}^{T} e^{tQ} (\sigma^{(1)}(x,0))^{2} dt \right)^{1/2} \left( \int_{0}^{T} e^{tQ} (\sigma^{(2)}(x,0))^{2} dt \right)^{3/2}} ,
\end{aligned}
\eea
here Q is the generator of the Markov process $(x_{t},\gamma(t))_{t}$ given by
\bea
Q f(x,t) = \frac{d f}{dt}(x,t) + \frac{g_{x}(t)}{\overline{G}_{x}(t)} \int_{X}P(x,dy) [f(y,0) - f(x,t)] .
\eea
\end{teo}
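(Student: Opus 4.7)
The plan is to take expectations of both sides of the first-order approximation \eqref{Eq:App.1} and identify the new work to be done. After substituting the integral definitions of $X,Y,Z$, the right-hand side becomes a sum of three types of terms: the leading piece $\Exp\{X\}/(\sqrt{Y_{0}}\sqrt{Z_{0}})$, already treated in Section \ref{Sec:CorrSimple} via Theorems \ref{teo:CovSwap} and \ref{teo:VarSwap}; the centering contributions proportional to $Y_{0}\Exp\{X\}$ and $Z_{0}\Exp\{X\}$, which are deterministic and evaluate immediately; and the genuinely new pieces
\[
\Exp\!\left\{\int_{0}^{T}\!\!\int_{0}^{T}\rho_{t}\,\sigma^{(1)}(x_{t},\gamma(t))\sigma^{(2)}(x_{t},\gamma(t))\,\sigma^{(i)^{2}}(x_{s},\gamma(s))\,ds\,dt\right\},\quad i=1,2.
\]
These last two double integrals are where essentially all the effort lies.

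First I would split each of them into the two triangles $\{s<t\}$ and $\{s>t\}$. On each triangle exactly one of the two time arguments is strictly smaller, so I would condition on $\Alg_{\min(s,t)}$ and pull the factor at the larger time out as a conditional expectation. Since $(x_{t},\gamma(t))_{t}$ is a Markov process, Remark \ref{Rem:Expsigma2} yields
\[
\Exp\{\sigma^{(i)^{2}}(x_{t},\gamma(t))\mid \Alg_{s}\}=e^{(t-s)Q}\sigma^{(i)^{2}}(x_{s},\gamma(s)) \qquad (s<t),
\]
and the argument used inside the proof of Theorem \ref{teo:CovSwap} furnishes the analogous identity
\[
\Exp\{\sigma^{(1)}(x_{t},\gamma(t))\sigma^{(2)}(x_{t},\gamma(t))\mid \Alg_{s}\}=e^{(t-s)Q}[\sigma^{(1)}(x,\cdot)\sigma^{(2)}(x,\cdot)](x_{s},\gamma(s))
\]
needed for the other triangle. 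Introducing the shorthand $h$ and $g^{(i)}$ of the appendix reduces each triangular integrand to a function of a single space-time point $(x_{u},\gamma(u))$ with $u=\min(s,t)$.

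Next I would apply Proposition \ref{Prop:Expsigma2} one more time to the outer expectation, replacing $\Exp\{F(x_{u},\gamma(u))\}$ by $e^{uQ}F(x,0)$, and substitute the explicit forms $h(x,0)=e^{(t-s)Q}[\sigma^{(1)}(x,0)\sigma^{(2)}(x,0)]$ and $g^{(i)}(x,0)=e^{(s-t)Q}\sigma^{(i)^{2}}(x,0)$. Re-expressing $Y_{0}$ and $Z_{0}$ through Theorem \ref{teo:VarSwap} and assembling the four triangular contributions yields the two correction summands with denominators $Y_{0}^{3/2}Z_{0}^{1/2}$ and $Y_{0}^{1/2}Z_{0}^{3/2}$ shown in the statement. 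The leading factor of $2$ in the first summand then arises naturally: the deterministic pieces $Y_{0}\Exp\{X\}$ and $Z_{0}\Exp\{X\}$ coming out of $\Exp\{X(Y-Y_{0})\}$ and $\Exp\{X(Z-Z_{0})\}$ each contribute a further $\tfrac{1}{2}\Exp\{X\}/\sqrt{Y_{0}Z_{0}}$, which combines with the original zero-order piece. Multiplying by $e^{-rT}$ and subtracting $K_{corr}$ gives $P_{corr}(x)$.

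The main obstacle is not analytic but combinatorial: the four cross-products generated by expanding the brackets in \eqref{Eq:App.1}, the two triangles per cross-product, and the deterministic versus random pieces in the centering must all be tracked so that the various $Y_{0}$'s and $Z_{0}$'s simplify in the correct way and only the genuinely second-order contributions survive. Once that bookkeeping is under control, the only probabilistic tools required are Proposition \ref{Prop:Expsigma2}, the Markov property of $(x_{t},\gamma(t))_{t}$, and the product identity established inside the proof of Theorem \ref{teo:CovSwap}.
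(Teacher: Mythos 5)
Your proposal is correct and follows essentially the same route as the paper: expand the first-order approximation of $Corr_{R}(S^{1},S^{2})$, reduce everything to the cross-moments $\Exp\{\int_{0}^{T}\int_{0}^{T}\rho_{t}\,\sigma^{(1)}(x_{t},\gamma(t))\sigma^{(2)}(x_{t},\gamma(t))\,\sigma^{(i)^{2}}(x_{s},\gamma(s))\,ds\,dt\}$, split the square into the triangles $s<t$ and $s>t$, condition at the earlier time using the Markov property of $(x_{t},\gamma(t))_{t}$ and the semigroup identities (introducing the same auxiliary functions $h$ and $g^{(i)}$), and reassemble. Your accounting for the factor of $2$ in the leading term, via the deterministic pieces of $\Exp\{X(Y-Y_{0})\}$ and $\Exp\{X(Z-Z_{0})\}$ recombining with the zero-order term, is exactly the bookkeeping the paper performs implicitly.
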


\clearpage


\end{document}